\documentclass[runningheads]{llncs}
\usepackage{enumitem}

\usepackage{graphicx,latexsym,amsfonts,amssymb,amsmath,subfigure}
\usepackage{xcolor}
\usepackage{fmtcount}
\usepackage{thm-restate}
\usepackage{hyperref}


\spnewtheorem{observation}{Observation}{\bfseries}{\itshape}


\usepackage{dsfont}
\newcommand{\N}{\ensuremath{\mathds{N}}} %
 %
 %


\renewcommand{\orcidID}[1]{\hypersetup{hidelinks}\hspace*{1pt}\href{https://orcid.org/#1}{\includegraphics[width=8pt]{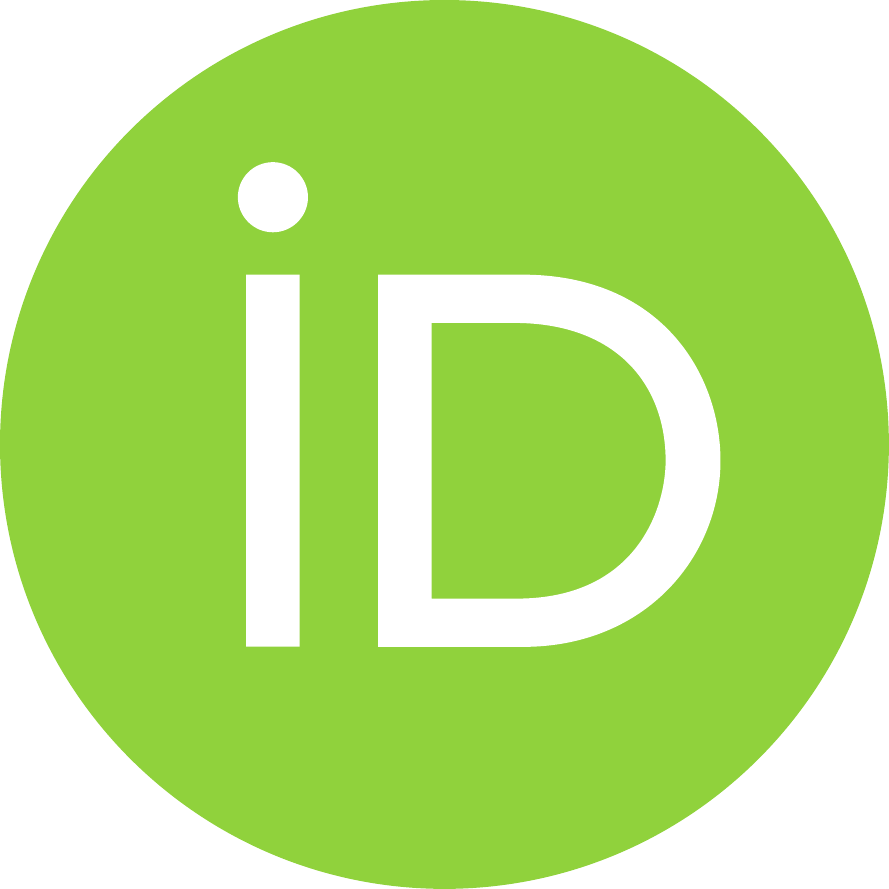}}}


\title{Simple Topological Drawings of\\
  $k$-Planar Graphs\thanks{This work was initiated at the
    $17^{th}$ Gremo Workshop on Open Problems (GWOP) 2019. The authors thank
    the organizers of the workshop for inviting us and providing a
    productive working atmosphere.
    M.~H.\ and M.~M.~R.\ are supported by the Swiss National Science Foundation within the collaborative DACH project \emph{Arrangements and Drawings} as SNSF Project 200021E-171681.
    Research by C.~D.~T.\ was supported in part by the NSF award DMS-1800734.}}
\author{Michael~Hoffmann\inst{1}\orcidID{0000-0001-5307-7106}
\and Chih-Hung~Liu\inst{1}\orcidID{0000-0001-9683-5982}
\and Meghana~M.~Reddy\inst{1}\orcidID{0000-0001-9185-1246}
\and Csaba~D.~T\'oth\inst{2,3}\orcidID{0000-0002-8769-3190}
}

\institute{%
  Department of Computer Science, ETHZ, Z\"urich, Switzerland\\
  \email{\{hoffmann,chih-hung.liu,meghana.mreddy\}@inf.ethz.ch}
  \and
  Department of Mathematics, Cal State Northridge,
  Los Angeles, CA, USA
  \and Department of Computer Science, Tufts
  University, Medford, MA, USA\\
  \email{csaba.toth@csun.edu}
}

\begin{document}

\maketitle

\begin{abstract}
  Every finite graph admits a \emph{simple (topological) drawing},
  that is, a drawing where every pair of edges intersects in at most
  one point.
  However, in combination with other restrictions simple drawings do
  not universally exist. For instance, \emph{$k$-planar graphs} are
  those graphs that can be drawn so that every edge has at most $k$
  crossings (i.e., they admit a \emph{$k$-plane drawing}).
  It is known that for $k\le 3$, every $k$-planar graph
  admits a $k$-plane simple drawing. But for $k\ge 4$, there
  exist $k$-planar graphs that do not admit a $k$-plane simple
  drawing. Answering a question by Schaefer, we show that there exists
  a function $f:\N\rightarrow\N$ such that every $k$-planar graph
  admits an $f(k)$-plane simple drawing, for all $k\in\N$. Note that
  the function $f$ depends on $k$ only and is independent of the size
  of the graph. Furthermore, we develop an algorithm to show
  that every $4$-planar graph admits an $8$-plane simple drawing.

  \keywords{Topological graphs \and local crossing number \and
    $k$-planar graphs}
\end{abstract}

\section{Introduction}

A \emph{topological drawing} of a graph $G$ in the plane is a
representation of $G$ in which the vertices are mapped to pairwise distinct points in the plane and edges are mapped to Jordan arcs
that do not pass through (the images of) vertices. Moreover, no three Jordan arcs pass through the same point in the plane, and every pair of Jordan arcs has finitely many intersection points, each of which is either a common endpoint or a \emph{crossing}, where the two arcs cross transversally.
A graph is \emph{$k$-planar} if it admits a topological drawing in the
plane where every edge is crossed at most $k$ times, and such a drawing is called a \emph{$k$-plane drawing}. A \emph{simple topological drawing} of a graph refers to a topological drawing where no two edges cross more than once and no two adjacent edges cross. We study simple topological drawings of $k$-planar graphs.

It is well known that drawings of a graph $G$ that attain the minimum number of crossings (i.e., the \emph{crossing number} 
of $G$) are simple topological drawings~\cite[p.~18]{schaefer2012graph}. However, a drawing that minimizes the total number of crossings need not minimize the maximum number of crossings per edge; and a drawing that minimizes the maximum number of crossings per edge need not be simple. A \emph{$k$-plane simple topological drawing} is a simple topological drawing where every edge is crossed at most $k$ times. We study the simple topological drawings of $k$-planar graphs and prove that there exists a function $f:\N\rightarrow\N$ such that every $k$-planar graph admits an $f(k)$-plane simple topological drawing by designing an algorithm to obtain the simple topological drawing from a $k$-plane drawing. 
The function $f$ in our bound is exponential in $k$, more precisely $f(k)\in O^*(3^k)$.
It remains open whether this can be improved to a bound that is polynomial in $k$.
We also present a significantly better bound for $4$-planar graphs.

In a $k$-plane drawing 
adjacent edges may cross, and two 
edges may
cross many times. To obtain a simple topological drawing, we need
to eliminate crossings between adjacent edges and ensure that any two
edges cross at most once.

\paragraph{Related Work.}
It is easy to see that every $1$-planar graph admits a $1$-plane simple topological drawing~\cite{Ringel65}.
Pach et al.~\cite[Lemma~1.1]{PachRTT06} proved that every $k$-planar
graph for $k \leq 3$ admits a $k$-plane simple topological drawing. 
However, these results do not extend to $k$-planar graphs, for $k > 3$. In fact, Schaefer~\cite[p.~57]{schaefer2012graph} constructed $k$-planar graphs that do not admit a $k$-plane simple topological drawing for $k=4$. The construction idea can be extended to all $k > 4$. The \emph{local crossing number} $\text{lcr}(G)$ of a graph $G$ is the minimum integer $k$ such that $G$ admits a drawing where every edge has at most $k$ crossings. The \emph{simple} local crossing number $\text{lcr}^*(G)$ minimizes $k$ over all simple topological drawings of $G$. Schaefer~\cite[p.~59]{schaefer2012graph} asked whether the $\text{lcr}^*(G)$ can be bounded by a function of $\text{lcr}(G)$. We answer this question in the affirmative and show that there exists a function $f:\N\rightarrow\N$ such that $\text{lcr}^*(G)\leq f(\text{lcr}(G))$.

The family of $k$-planar graphs, for small values of $k$, was instrumental in proving the current best bounds on the multiplicative constant in the Crossing Lemma and the Szemer\'edi-Trotter theorem on point-line incidences~\cite{ackerman2019topological,PachRTT06}. Ackerman~\cite{ackerman2019topological} showed that every  graph with $n\geq 3$ vertices that admits a simple 4-plane drawing has at most $m\leq 6n-12$ edges, and claims that this bound holds for all 4-planar graphs. Pach et al.~\cite[Conjecture~5.4]{PachRTT06} conjectured that for all $k,n\geq 1$, the maximum number of edges in a $k$-planar $n$-vertex graph is attained by a graph that admits a simple $k$-plane drawing.

\section{Preliminaries}

\paragraph{Lenses in topological drawings.}
We start with definitions needed to describe the key operations in our algorithms. In a topological drawing, we define a structure called \textit{lens}. Consider two edges, $e$ and $f$, that intersect in two distinct points, $\alpha$ and $\beta$ (each of which is either a common endpoint or a crossing). Let $e_{\alpha\beta}$ (resp., $f_{\alpha\beta}$) denote the portion of $e$ (resp., $f$) between $\alpha$ and $\beta$. The arcs $e_{\alpha\beta}$ and $f_{\alpha\beta}$ together are called a \textit{lens} if $e_{\alpha\beta}$ and $f_{\alpha\beta}$ do not intersect except at $\alpha$ and $\beta$. See \figurename~\ref{fig:fig1} for examples. The lens is denoted by $L(e_{\alpha\beta},f_{\alpha\beta})$. A lens $L(e_{\alpha\beta},f_{\alpha\beta})$ is bounded by \emph{independent arcs} if both $\alpha$ and $\beta$ are crossings, else (if $\alpha$ or $\beta$ is a vertex of $G$) it is bounded by \emph{adjacent arcs}.

\begin{restatable}{lemma}{lemmaone}\label{lemma:existence_of_lens}
If a pair of edges $e$ and $f$ intersect in more than one point, then there exist arcs
$e_{\alpha\beta}\subset e$ and $f_{\alpha\beta}\subset f$ that form a lens.
\end{restatable}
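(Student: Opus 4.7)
The plan is to extract $\alpha$ and $\beta$ as two intersection points that are consecutive along one of the two edges. The key observation is that, by the definition of topological drawing given in the paper, the intersection set $I = e \cap f$ is finite, and by hypothesis $|I| \geq 2$.

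Concretely, I would parametrize $e$ as a Jordan arc $\gamma_e\colon[0,1]\to\R^2$; pulling back $I$ along $\gamma_e$ yields a finite set in $[0,1]$, which gives a natural linear order on $I$. Let $\alpha, \beta \in I$ be any two points that are consecutive in this order, and let $e_{\alpha\beta}\subset e$ be the (unique) sub-arc of $e$ between them. By the choice of $\alpha,\beta$ as consecutive along $e$, the relative interior of $e_{\alpha\beta}$ contains no further point of $I$, so
\[
e_{\alpha\beta}\cap f \;=\; \{\alpha,\beta\}.
\]

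Now let $f_{\alpha\beta}\subset f$ be the unique sub-arc of $f$ with endpoints $\alpha$ and $\beta$ (uniqueness follows from $f$ being a Jordan arc; if $\alpha$ or $\beta$ is an endpoint of $f$, the sub-arc is still well-defined). Since $f_{\alpha\beta}\subseteq f$, we get $e_{\alpha\beta}\cap f_{\alpha\beta}\subseteq \{\alpha,\beta\}$, and the reverse inclusion is immediate because both $\alpha$ and $\beta$ lie on both sub-arcs. Hence $e_{\alpha\beta}\cap f_{\alpha\beta}=\{\alpha,\beta\}$, which is exactly the lens condition, so $L(e_{\alpha\beta},f_{\alpha\beta})$ is the desired lens.

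There is no substantial obstacle here: the whole argument rests on the finiteness of $e\cap f$ (which is part of the definition of a topological drawing in the paper) and on the fact that a Jordan arc has a unique sub-arc between any two of its points. The only thing worth being careful about is allowing $\alpha$ or $\beta$ to be a common endpoint of $e$ and $f$ rather than a crossing, but the argument above is insensitive to this distinction; the resulting lens is simply bounded by adjacent rather than independent arcs, consistent with the paper's terminology.
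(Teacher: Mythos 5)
Your proof is correct, and it takes a genuinely different (and cleaner) route than the paper. The paper argues by contradiction with an extremal principle: it assumes no lens exists, picks the pair of intersection points $\alpha,\beta$ minimizing the number of crossings in $e_{\alpha\beta}\cup f_{\alpha\beta}$, and then shows that any interior crossing $\gamma$ yields a smaller pair $(\alpha,\gamma)$, a contradiction. You instead give a direct construction: linearly order the (finite) intersection set $I=e\cap f$ along $e$, take any two consecutive points $\alpha,\beta$, and observe that then $e_{\alpha\beta}\cap f=\{\alpha,\beta\}$, whence $e_{\alpha\beta}\cap f_{\alpha\beta}\subseteq e_{\alpha\beta}\cap f=\{\alpha,\beta\}$, so $e_{\alpha\beta}$ and $f_{\alpha\beta}$ form a lens regardless of how $f_{\alpha\beta}$ behaves elsewhere. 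Your approach avoids contradiction entirely, needs no global minimization (only a local ``next intersection'' along one edge), and makes explicit which lens is produced, which would be convenient if one wanted to turn the existence statement into an algorithm; the paper's proof is slightly shorter to state but less constructive. Both proofs correctly handle the case where $\alpha$ or $\beta$ is a shared endpoint rather than a crossing.
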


\begin{figure}[ht]
  \centering
  \subfigure[]{
	 \label{fig:fig1_a}
	\includegraphics[scale=1]
	{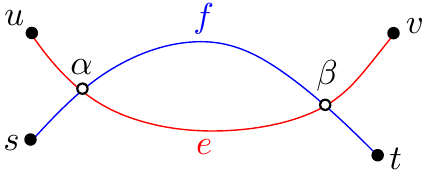}}
  \qquad\qquad
  \subfigure[]{
	 \label{fig:fig1_b}
	\includegraphics[scale=1]
	{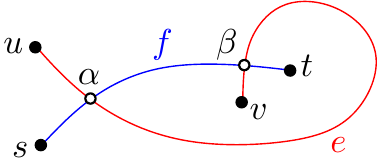}}

    \caption{Lenses formed by two edges.}
  \label{fig:fig1}
\end{figure}


\paragraph{Operations.}
We present 
algorithms that transform a $k$-plane drawing into a simple
topological drawing by a sequence of elementary operations. Each
operation modifies one or two edges that form a lens so that the lens
is eliminated. We use two elementary operations, \textsc{Swap} and
\textsc{Reroute}. Both have been used previously (e.g., in
\cite[Lemma~1.1]{PachRTT06}); we describe them here for completeness.

The common setup in both operations is the following. Let $e=uv$ and
$f=st$ be edges that form a lens $L(e_{\alpha\beta},f_{\alpha\beta})$,
where $\alpha$ and $\beta$ are each a crossing or a common endpoint.
Assume that the Jordan arc of $e$ visits $u$, $\alpha$,
$\beta$, $v$, and the Jordan arc of $f$ visits $s$, $\alpha$, $\beta$,
and $t$ in this order. Let $\overline{\alpha}$ and $\overline{\beta}$
be sufficiently small disks centered at $\alpha$ and $\beta$, resp.,
so that their boundary circles each intersect $e$ and $f$ twice, but
do not intersect any other edge.

\paragraph{Swap operation.}
We modify the drawing of $e$ and $f$ in three steps as follows.
(1)~Redraw $e$ such that it follows its current arc from $u$ to $\alpha$, then continues along $f_{\alpha\beta}$ to $\beta$, and further to $v$ along its original arc. Similarly, redraw $f$ such that it follows its current arc from $s$ to $\alpha$, then continues along $e_{\alpha\beta}$ to $\beta$, and further to $t$ along its original arc.
(2)~Replace the portion of $e$ and $f$ in $\overline{\alpha}$ and $\overline{\beta}$ by straight line segments.
(3)~Eliminate self-crossings, if any is introduced, by removing any loops from the modified arcs of $e$ and $f$.
The swap operation is denoted by \textsc{Swap}$(e_{\alpha\beta},f_{\alpha\beta})$; see \figurename~\ref{fig:lens2} for illustrations.
The swap operation for a lens bounded by adjacent arcs is defined similarly.

\begin{figure}[ht]
  \centering
  \subfigure[]{
	 \label{fig:lens1_img2}
	\includegraphics[scale=1]
	{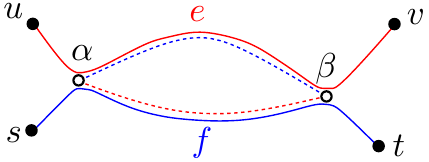}}
  \qquad\qquad
  \subfigure[]{
	 \label{fig:lens2_img2}
	\includegraphics[scale=1]
	{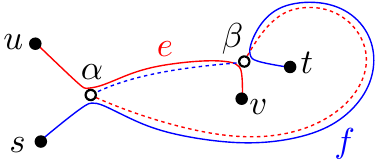}}

    \caption{\textsc{Swap}$(e_{\alpha\beta},f_{\alpha\beta})$ applied to the two lenses in \figurename~\ref{fig:fig1}}
  \label{fig:lens2}
\end{figure}

%

\begin{restatable}{observation}{obsone}\label{obs:exchange}
Let $D$ be a topological drawing of a graph $G$, and let $L(e_{\alpha\beta},f_{\alpha\beta})$ be a lens. Operation \textsc{Swap}$(e_{\alpha\beta},f_{\alpha\beta})$ produces a topological drawing that has at least one fewer crossing than $D$.
\end{restatable}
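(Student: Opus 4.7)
The plan is to keep a ledger of crossings while running through Steps~(1)--(3) of the swap and to show that the count is non-increasing at every step but strictly decreases at Step~(2). I would partition the crossings of $D$ into three groups: crossings between $e$ and $f$, crossings of $e\cup f$ with a third edge $g\notin\{e,f\}$, and self-crossings of $e$ or $f$ (of which there are none in $D$).

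The key observation is that Step~(1) leaves the point set $e\cup f$ unchanged as a subset of the plane; it only relabels which portion belongs to $e$ and which to $f$ along the lens. Consequently, every crossing of a third edge $g$ with $e\cup f$ in $D$ corresponds bijectively to such a crossing after Step~(1), and every $e$--$f$ crossing outside the lens is preserved as well. Inside the lens there are no $e$--$f$ crossings by the very definition of a lens. The decisive work is done by Step~(2): replacing the portions of $e$ and $f$ inside $\overline\alpha$ and $\overline\beta$ by straight segments joining the same four boundary points destroys the crossing at $\alpha$ (resp.\ $\beta$), because after the swap the new $e$ and new $f$ enter and leave each small disk in a pattern that admits a pair of disjoint chords. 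The disks are chosen to miss every other edge, so this surgery introduces no new crossings with any third edge $g$. Since $G$ is a simple graph, $e=uv$ and $f=st$ share at most one common endpoint, so at least one of $\alpha,\beta$ is a genuine crossing of $D$, and that crossing is thereby eliminated.

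Step~(3) only shortens the two arcs by erasing self-loops, hence it can neither create new crossings of any of the three types nor re-introduce the crossing eliminated at $\alpha$ or $\beta$; it can only help. Combining the three steps, the total crossing count strictly decreases by at least one, as claimed. The main technical point to justify carefully is the chord-pairing in Step~(2): one has to verify that the four endpoints of $e$ and $f$ on $\partial\overline\alpha$, after the relabelling of Step~(1), can be connected by straight chords that are disjoint inside $\overline\alpha$ and really do remove the crossing at $\alpha$. This in turn follows from the transversality of the original crossing at $\alpha$ and from the disks being chosen sufficiently small.
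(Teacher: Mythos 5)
Your proof is correct and follows essentially the same route as the paper's: argue that the swap creates no new crossings (you track this step-by-step via a partition of crossings and the observation that Step~(1) preserves $e\cup f$ as a point set), that the crossings at $\alpha$ and $\beta$ are eliminated by the local surgery in Step~(2), and that at least one of $\alpha,\beta$ is a genuine crossing because $G$ is simple, so the total drops by at least one. The paper states these facts more tersely; your version makes explicit the relabelling argument in Step~(1) and the cyclic-order/chord-pairing check in $\overline{\alpha}$, which is a reasonable fleshing-out of the same idea rather than a different approach.
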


\paragraph{Reroute operation.}
We modify the drawing of $f$ in three steps as follows.
(1)~Redraw $f$ such that it follows its current arc from $s$ to the first intersection with $\overline{\alpha}$, it does not cross $e$ in $\overline{\alpha}$, and then it closely follows arc $e_{\alpha\beta}$ to $\overline{\beta}$, and further follows its original arc from $\overline{\beta}$ to $t$.
(2)~Replace the portion of $f$ in the interior of $\overline{\alpha}$ and $\overline{\beta}$ by straight line segments.
(3)~Eliminate self-crossings, if any are introduced, by removing any loops from the modified arc of $f$.
The reroute operation is denoted by \textsc{Reroute}$(e_{\alpha\beta},f_{\alpha\beta})$; see \figurename~\ref{fig:lens3} for illustrations.
The reroute operation for a lens bounded by adjacent arcs is defined similarly.

\begin{figure}[ht]
  \centering
  \subfigure[]{
	 \label{fig:lens1_img3}
	\includegraphics[scale=1]
	{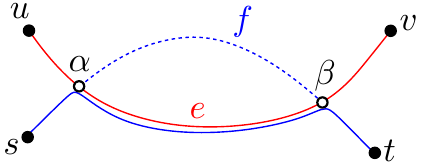}}
  \qquad
  \subfigure[]{
	 \label{fig:lens2_img3}
	\includegraphics[scale=1]
	{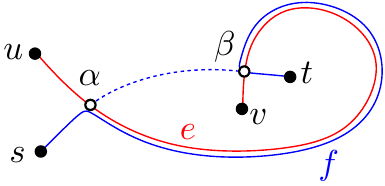}}

    \caption{\textsc{Reroute}$(e_{\alpha\beta},f_{\alpha\beta})$ operation on the two lenses in \figurename~\ref{fig:fig1}}
  \label{fig:lens3}
\end{figure}

\begin{restatable}{observation}{obstwo}\label{obs:reroute}
Let $D$ be a topological drawing of a graph $G$, and let $L(e_{\alpha\beta},f_{\alpha\beta})$ be a lens. Operation \textsc{Reroute}$(e_{\alpha\beta},f_{\alpha\beta})$ produces a topological drawing.
\end{restatable}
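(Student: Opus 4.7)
The plan is to verify, in turn, each defining property of a topological drawing for the modified drawing $D'$ produced by $\textsc{Reroute}(e_{\alpha\beta},f_{\alpha\beta})$. Since only the arc of $f$ is modified, it suffices to check that the new arc of $f$ is a Jordan arc from $s$ to $t$, that it avoids every vertex other than $s$ and $t$, that its intersections with every other edge are finitely many common endpoints or transversal crossings, and that no three arcs meet at a single point.

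First I would argue that the new arc of $f$ is a Jordan arc. After step~(1), it is the concatenation of five continuous pieces: the original arc from $s$ to $\partial\overline{\alpha}$; a short piece inside $\overline{\alpha}$ that does not cross $e$; a curve closely following $e_{\alpha\beta}$ between $\partial\overline{\alpha}$ and $\partial\overline{\beta}$; a short piece inside $\overline{\beta}$; and the original arc from $\partial\overline{\beta}$ to $t$. Self-intersections can only arise if the original arc of $f$ re-enters a neighborhood of $\alpha$ or $\beta$, or if the close parallel to $e_{\alpha\beta}$ meets one of the two end-pieces of $f$. Each such self-intersection bounds a loop that is removed in step~(3), so after step~(3) the arc is simple.

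Vertex-avoidance follows by choosing the disks $\overline{\alpha}$ and $\overline{\beta}$ small enough to contain no vertex other than possibly $s$ or $t$, and by taking the close parallel inside a tubular neighborhood of $e_{\alpha\beta}$ narrow enough to avoid every other vertex; this is possible since $e_{\alpha\beta}$ itself avoids all vertices except possibly at its endpoints. For any edge $g\notin\{e,f\}$, the intersections with the new arc of $f$ are inherited: outside the reroute region they are unchanged; inside the tubular neighborhood of $e_{\alpha\beta}$ the close parallel meets $g$ at one point near each crossing of $g$ with $e_{\alpha\beta}$, transversally, because transversality is preserved by a sufficiently close parallel together with a small generic perturbation; inside the small disks, step~(2) replaces the local portion of $f$ by a straight segment that meets every edge through the disk transversally. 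The intersections of the new $f$ with $e$ itself are those of the unchanged end-pieces with $e$, together with a possible shared endpoint if $e$ and $f$ are adjacent; along $e_{\alpha\beta}$ the parallel does not meet $e$ by construction.

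The main technical point is to rule out tangencies, triple points, and new passages through vertices introduced by the close-following and the straight-segment replacements in $\overline{\alpha}$ and $\overline{\beta}$. This is handled by a standard genericity argument: since the tubular neighborhood of $e_{\alpha\beta}$ may be taken arbitrarily narrow and the close parallel may be selected by an arbitrarily small generic perturbation within it, each crossing of $e_{\alpha\beta}$ with another edge yields exactly one transversal crossing with the new $f$, and no new tangencies or triple points appear. Together these observations establish all four defining properties of a topological drawing for $D'$.
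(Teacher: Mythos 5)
Your proof is correct and takes essentially the same approach as the paper's, which simply notes that only one edge is modified and that step~(3) removes any self-intersections, hence the result is a topological drawing. Your version spells out the genericity and transversality details that the paper leaves implicit, but the underlying argument is the same.
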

While a \textsc{Reroute}$(e_{\alpha\beta},f_{\alpha\beta})$
operation modifies only the edge $f$, 
it may increase the
total number of crossings, as well as the number of crossings on 
$f$.

\paragraph{Planarization.}
Let $D$ be a topological drawing of a graph $G$. Denote by $N$ the
planarization of $D$ (i.e., we introduce a vertex of degree four at
every crossing in $D$). We call this graph a \textit{network}. We
refer to the vertices and edges of $N$ as \emph{nodes} and
\emph{segments}, respectively, so as to distinguish them from the
corresponding entities in $G$. Our algorithms in
Section~\ref{sec:general}--\ref{sec:8bound} use the planarization $N$
of a drawing $D$, then successively modify the drawing $D$, and
ultimately return a simple topological drawing of $G$. We formulate
invariants for these algorithms in terms of the planarization $N$ of
the initial drawing. In other words, $N$ remains fixed (in particular,
$N$ will not be the planarization of the modified drawings). As
\textsc{Reroute} operations redraw edges to closely follow existing
edges, our algorithms will maintain the following invariants:

\begin{enumerate}[label={(I\arabic*)}]\setlength{\itemindent}{\labelsep}
\item\label{I1} Every edge in $D$ closely follows a path in the network $N$;
\item\label{I2} every pair of edges in $D$ cross only in a small neighborhood of a node of $N$;
\item\label{I3} every pair of edges crosses at most once in each such neighborhood.
\end{enumerate}

\paragraph{Length of an arc and number of crossings.} Let $a$ be a Jordan arc that closely follows a path in $N$ such that its endpoints are in the small neighborhoods of nodes of $N$. The \emph{length} of $a$, denoted by $\ell(a)$, is the graph-theoretic length of the path of $N$ that $a$ closely follows. Let $x(a)$ denote the number of crossings on the arc $a$ in a drawing $D$. Note that the length $\ell(a)$ is measured in terms of the (fixed) network $N$, and $x(a)$ is measured in terms of the (varying) drawing $D$. For instance, in \figurename~\ref{fig:lens2_img3} we have $\ell(f)=3$ both before and after rerouting, whereas $x(f)=2$ before and $x(f)=1$ after rerouting.


\section{General Bound for $k$-Planar Graphs}
\label{sec:general}

In this section we describe and analyze an algorithm to transform a
topological drawing into a simple topological drawing whose local
crossing number is bounded by a function of the local crossing number
of the original drawing.

\begin{description}
\item[Algorithm~1.]

\item[] Let $D_0$ be a topological $k$-plane
  drawing of a graph $G=(V,E)$. Let $N$ be the planarization of $D_0$.
  Let $D:=D_0$. 

\item[] While there exists a lens in $D$, do the following.
  \begin{description}
  \item Let $L(e_{\alpha\beta},f_{\alpha\beta})$ be a lens so that
    w.l.o.g.\ $\ell(e_{\alpha\beta})<\ell(f_{\alpha\beta})$, or
    $\ell(e_{\alpha\beta})=\ell(f_{\alpha\beta})$ and
    $x(e_{\alpha\beta})\leq x(f_{\alpha\beta})$. Modify $D$ by
    applying \textsc{Reroute}$(e_{\alpha\beta},f_{\alpha\beta})$.
\end{description}

\item[] When the while loop terminates, return the drawing $D$.
\end{description}

\begin{restatable}{observation}{obsthree}\label{obs:length}
Algorithm~1 maintains invariants \ref{I1}--\ref{I3}, and the length of every edge decreases or remains the same.
\end{restatable}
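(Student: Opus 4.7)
The plan is to proceed by induction on the number of iterations of the while loop. The initial drawing $D_0$ trivially satisfies \ref{I1}--\ref{I3} with respect to its own planarization $N$, and every edge has its initial length. So it suffices to show that a single \textsc{Reroute} step performed by Algorithm~1 preserves \ref{I1}--\ref{I3} and does not increase the length of any edge. Since the operation modifies only $f$, every other edge automatically retains all three properties and its length, and the whole argument reduces to verifying the four claims for the new arc of $f$.

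For \ref{I1}, the new arc of $f$ is the concatenation of three pieces: the initial portion of $f$ from $s$ to the boundary of $\overline{\alpha}$, a middle piece that closely follows $e_{\alpha\beta}$, and the final portion of $f$ from the boundary of $\overline{\beta}$ to $t$. By the inductive hypothesis the old $f$ closely followed a path $P_f$ in $N$, and $e_{\alpha\beta}$ closely follows a subpath $P'_e$ of the path $P_e$ associated with $e$. Thus the new $f$ closely follows a walk $W$ in $N$ obtained by splicing $P'_e$ into $P_f$ at the nodes whose small neighborhoods contain $\alpha$ and $\beta$. Step~(3) of \textsc{Reroute} cuts out any loops; after this pruning, $W$ becomes a path (or a shorter walk) in $N$, so \ref{I1} is preserved.

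For \ref{I2} and \ref{I3}, any crossing between the new $f$ and a third edge $g$ lies either on a portion of $f$ untouched by the reroute (where the conclusion follows from the inductive hypothesis applied to the old $f$), on the middle piece that closely follows $e_{\alpha\beta}$ (where crossings correspond, up to a small perturbation, to crossings of $g$ with $e$, so they occur in node neighborhoods and satisfy \ref{I3} by the inductive hypothesis applied to $e$), or inside $\overline{\alpha}$ or $\overline{\beta}$, which are themselves small neighborhoods of nodes. Step~(2) replaces the portion of $f$ inside these two disks by straight line segments, each of which crosses any other arc at most once, and step~(3) removes self-overlaps; together these ensure that no pair of edges crosses more than once in a single node neighborhood, yielding \ref{I2} and \ref{I3}. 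For the length, before loop removal we have $\ell(f_{\mathrm{new}}) \leq \ell(f) - \ell(f_{\alpha\beta}) + \ell(e_{\alpha\beta})$, and loop removal can only decrease length; since Algorithm~1 chooses the lens so that $\ell(e_{\alpha\beta}) \leq \ell(f_{\alpha\beta})$, we conclude $\ell(f_{\mathrm{new}}) \leq \ell(f)$.

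The main obstacle I anticipate is pinning down the behavior of \ref{I3} at the two pivot nodes (those whose neighborhoods contain $\alpha$ and $\beta$). At such a node several edges may already be bundled together, following the same incident segments of $N$; once $f$ joins this bundle, one must verify that splicing via a straight segment in $\overline{\alpha}$ or $\overline{\beta}$, followed by loop removal, truly leaves at most one crossing between the new $f$ and each other edge passing through that neighborhood. The underlying local picture is the same as in \cite[Lemma~1.1]{PachRTT06}, but writing this out formally requires unpacking the ``closely follows'' convention, which is the step where I would spend most of the care in a complete proof.
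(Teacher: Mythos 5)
Your proof is correct and takes essentially the same approach as the paper's, which simply asserts that \textsc{Reroute} preserves \ref{I1}--\ref{I3} ``by construction'' and that each iteration replaces an arc by one of equal or smaller length; you have merely unpacked that terse assertion into an explicit induction on iterations with a case analysis for where crossings of the rerouted edge can occur.
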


\begin{corollary}\label{cor:length}
Algorithm~1 maintains the following invariant:
\begin{enumerate}[resume*]\setlength{\itemindent}{\labelsep}
\item\label{I4} The length of every edge in $D$ is at most $k+1$.
\end{enumerate}
\end{corollary}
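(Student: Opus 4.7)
The plan is to derive \ref{I4} from two ingredients: (a) the bound holds initially for $D=D_0$, and (b) Observation~\ref{obs:length} guarantees that the length of every edge never increases during the while loop. Together these yield the invariant by induction on the number of iterations.

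For (a), I would use the fact that the input $D_0$ is a $k$-plane drawing, so every edge of $G$ is crossed at most $k$ times in $D_0$. By construction, the planarization $N$ places one degree-$4$ node at each crossing of $D_0$. Hence an edge of $D_0$ that participates in $j \le k$ crossings is subdivided into exactly $j+1$ segments of $N$, and the path in $N$ that this edge closely follows (trivially, since $D=D_0$ at the start) has graph-theoretic length $j+1 \le k+1$. By the definition of length stated just before Corollary~\ref{cor:length}, every edge of $D$ therefore has length at most $k+1$ when the algorithm enters the while loop, so \ref{I4} holds initially.

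For (b), I would directly quote the second half of Observation~\ref{obs:length}, which asserts that each \textsc{Reroute} step performed by Algorithm~1 either decreases or preserves the length of every edge. A one-line induction on the number of iterations of the while loop then shows that the bound $\ell(e) \le k+1$ is preserved throughout the execution, so \ref{I4} is indeed an invariant.

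The only conceptual point is to identify ``number of crossings of an edge in $D_0$'' with ``graph-theoretic length of the corresponding path in $N$''; this identification is immediate from the definitions of planarization and of length, since the internal nodes of the associated path in $N$ are precisely the crossings that the edge participates in within $D_0$. There is no real obstacle in the argument: all of the substantive work has already been carried out in Observation~\ref{obs:length}, and the corollary is essentially a bookkeeping statement that converts the monotonicity of edge length into the absolute bound $k+1$ by plugging in the initial value.
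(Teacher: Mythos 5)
Your proof is correct and follows essentially the same reasoning the paper uses: the base case ($\ell(e)\le k+1$ in $D_0$ because a $k$-plane edge with $j\le k$ crossings becomes a path of $j+1$ segments in $N$) is exactly the observation made at the start of the paper's proof of Observation~\ref{obs:length}, and combining it with the monotonicity of edge length from that observation gives the corollary by a trivial induction, just as you describe.
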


\begin{lemma}\label{lem:termination}
Algorithm~1 terminates and transforms a $k$-plane topological drawing into a simple topological drawing of $G$.
\end{lemma}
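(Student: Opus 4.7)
The plan is to prove correctness (the output is a simple topological drawing) and termination (the loop halts) separately.

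For correctness, suppose the while loop terminates; then $D$ contains no lens. Lemma~\ref{lemma:existence_of_lens} immediately gives that any two edges of $D$ intersect in at most one point. To rule out crossings between adjacent edges, I would note that if two edges sharing a vertex $u$ crossed at a point $\gamma$, the portions from $u$ to $\gamma$ along each edge would form a lens bounded by adjacent arcs, contradicting the termination condition. Hence $D$ is simple.

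For termination, I would use the lexicographic potential $\Phi(D) = (L(D), X(D))$, where $L(D) = \sum_{e \in E} \ell(e)$ is the total edge length and $X(D)$ is the total number of crossings in $D$. By Corollary~\ref{cor:length}, $L(D) \leq (k+1)|E|$; by invariants~\ref{I2}--\ref{I3}, any pair of edges crosses at most $|V(N)|$ times, so $X(D)$ is bounded by a function of the fixed network $N$. Thus $\Phi$ ranges over a finite subset of $\N^2$, and it suffices to show a strict lex-decrease per iteration. In the easy case $\ell(e_{\alpha\beta}) < \ell(f_{\alpha\beta})$, Observation~\ref{obs:length} guarantees that $\ell(f)$ strictly shrinks, so $L(D)$ drops. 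In the delicate case $\ell(e_{\alpha\beta}) = \ell(f_{\alpha\beta})$ with $x(e_{\alpha\beta}) \leq x(f_{\alpha\beta})$, the total length may be preserved; I plan to show $X(D)$ strictly decreases. Let $A$ (resp.\ $B$) be the set of edges other than $e, f$ that cross $f_{\alpha\beta}$ (resp.\ $e_{\alpha\beta}$), and let $c \in \{1, 2\}$ count the points among $\{\alpha, \beta\}$ that are genuine crossings; since $e \neq f$ cannot share two endpoints, $c \geq 1$. Invariant~\ref{I3} yields $x(f_{\alpha\beta}) = |A| + c$ and $x(e_{\alpha\beta}) = |B| + c$, and after Reroute the new arc of $f$ inherits the $|B|$ crossings of $e_{\alpha\beta}$ with third-party edges and no crossings with $e$, so the total number of crossings changes by $|B| - |A| - c = \bigl(x(e_{\alpha\beta}) - x(f_{\alpha\beta})\bigr) - c \leq -1$.

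The main obstacle I expect is precisely the bookkeeping in this equal-length case. One must carefully invoke invariants~\ref{I2}--\ref{I3} to argue that the new portion of $f$ picks up exactly one crossing with each edge in $B$ and none with $e$, and to verify that any self-crossings of $f$ created when parts of $f$ outside $f_{\alpha\beta}$ meet $e_{\alpha\beta}$ are cleaned up in step~(3) of Reroute without distorting the count. Once this accounting is justified, the strict lex-decrease of $\Phi$ together with its finite range yields termination, and combined with the correctness argument we conclude that Algorithm~1 returns a simple topological drawing of $G$.
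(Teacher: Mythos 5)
Your proposal is correct and follows the same strategy as the paper: observe that no lenses implies simplicity (via Lemma~\ref{lemma:existence_of_lens}), and use the lexicographic potential (total length, total number of crossings) to prove termination, with the key point being that in the tie case $\ell(e_{\alpha\beta})=\ell(f_{\alpha\beta})$, $x(e_{\alpha\beta})\leq x(f_{\alpha\beta})$ the \textsc{Reroute} eliminates the crossing at $\alpha$ or $\beta$ without introducing more third-party crossings than it removes. The paper's version is terser (it simply notes that the crossing at $\alpha$ or $\beta$ is eliminated and the total decreases), whereas you spell out the crossing count $|B|-|A|-c\leq -1$; this is a harmless elaboration, and your noted caveat about loop removal in step~(3) only helps the bound since removing loops can only delete crossings.
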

\begin{proof}
Let the sum of lengths of all edges in the drawing be defined as the \emph{total length} of the drawing (recall that the length of an edge is the length of the corresponding path in $N$).
By Observation~\ref{obs:length}, the total length of the drawing monotonically decreases.
If the total length remains the same in one iteration of the while loop,
then $\ell(e_{\alpha\beta})=\ell(f_{\alpha\beta})$ and $x(e_{\alpha\beta})\leq x(f_{\alpha\beta})$. Since \textsc{Reroute}$(e_{\alpha\beta},f_{\alpha\beta})$ eliminates a crossing at $\alpha$ or $\beta$, the total number of crossings strictly decreases in this case.
Thus, the algorithm terminates. By Observations~\ref{obs:exchange}--\ref{obs:reroute}, the algorithm maintains a topological drawing. The drawing $D'$ returned by the algorithm does not contain lenses.
By Lemma~\ref{lemma:existence_of_lens}, any two edges in $D'$ intersect in at most one point. Consequently, $D'$ is a simple topological drawing of $G$.
\end{proof}


\begin{lemma}[Crossing Lemma {\cite[Theorem~6]{ackerman2019topological}}] \label{thm:crossing_lemma}
Let $G$ be a graph with $n$ vertices and $m$ edges and $D$ be a topological drawing of $G$. Let $\text{cr}(D)$ be defined as the total number of crossings in $D$, and $\text{cr}(G)$ be defined as the minimum of $\text{cr}(D)$ over all drawings $D$ of $G$. If $m \geq 6.95n$, then $\text{cr}(G) \geq \frac{1}{29} \frac{m^3}{n^2}$.
\end{lemma}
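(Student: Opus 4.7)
The plan is to derive the $m^3/n^2$ bound by probabilistic amplification of a strong linear lower bound on $\text{cr}(G)$, following the Ajtai--Chv\'atal--Newborn--Szemer\'edi scheme sharpened by Ackerman.

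First, I would establish a linear base case of the form $\text{cr}(G)\ge 5m-c\,n$ for a suitable constant $c$. The weak version $\text{cr}(G)\ge m-3(n-2)$ follows immediately from Euler's formula, since deleting one edge per crossing leaves a planar graph. To boost the leading coefficient from $1$ to $5$, one iteratively peels off edges with many crossings and applies the edge-count bounds for $k$-planar graphs for small $k$: $m\le 4n-8$ for $k=1$, $m\le 5n-10$ for $k=2$, $m\le 5.5(n-2)$ for $k=3$, and Ackerman's bound $m\le 6n-12$ for simple $4$-plane graphs (the key new ingredient relative to Pach--T\'oth). Applying these four inequalities in sequence to a crossing-minimum drawing yields the strengthened linear bound with an explicit constant $c$.

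Second, I would apply the classical random-sampling amplification. Fix an optimal drawing $D$ of $G$ and keep each vertex independently with probability $p\in(0,1]$; let $G'$ be the induced subgraph with the inherited drawing $D'$. Linearity of expectation gives
\[
\mathbb{E}[n']=pn,\qquad \mathbb{E}[m']=p^{2}m,\qquad \mathbb{E}[\text{cr}(D')]\le p^{4}\,\text{cr}(G).
\]
Plugging these into the base case $\mathbb{E}[\text{cr}(D')]\ge 5\,\mathbb{E}[m']-c\,\mathbb{E}[n']$ yields $p^{4}\,\text{cr}(G)\ge 5p^{2}m-c\,pn$, i.e.
\[
\text{cr}(G)\ge \frac{5m}{p^{2}}-\frac{c\,n}{p^{3}}.
\]
Optimizing over $p$ (choosing $p$ proportional to $n/m$, which is feasible precisely when the ratio $m/n$ exceeds the threshold $6.95$) produces the bound $\text{cr}(G)\ge\tfrac{1}{29}\,m^{3}/n^{2}$.

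The main obstacle is the sharp linear base case with coefficient $5$: it depends on Ackerman's structural analysis of simple $4$-plane drawings and in particular his proof that $m\le 6n-12$ for every simple $4$-plane graph on $n\ge 3$ vertices, which in turn rests on a careful discharging argument on the planarization. Once the strengthened linear bound is in hand, the probabilistic step and the final optimization in $p$ are routine calculations that pin down the precise threshold $6.95$ and the constant $1/29$.
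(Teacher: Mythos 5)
This lemma is not proved in the paper: it is imported verbatim as Theorem~6 of Ackerman~\cite{ackerman2019topological}, so there is no in-paper argument to compare against. Your reconstruction is, however, faithful to Ackerman's actual proof. The linear base case is obtained exactly as you describe: from a crossing-minimal (hence simple) drawing one greedily deletes the edge with the \emph{most} crossings, and as long as more than $m_k$ edges remain --- where $m_k$ is the maximum edge count of a graph admitting a simple $k$-plane drawing, with $m_0\le 3(n-2)$, $m_1\le 4(n-2)$, $m_2\le 5(n-2)$, $m_3\le 5.5(n-2)$, $m_4\le 6(n-2)$ --- the deleted edge contributes at least $k+1$ crossings, giving $\text{cr}(G)\ge 5m-\sum_{k=0}^{4}m_k$. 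Two points worth making explicit if you were to flesh this out: first, since the starting drawing is crossing-minimal and therefore simple, and edge deletion preserves simplicity, only the edge bound for \emph{simple} $4$-plane drawings is needed (conveniently, since the present paper observes that the extension to arbitrary $4$-planar graphs is only claimed by Ackerman, not proved); second, in the amplification step one uses $\text{cr}(D')\ge\text{cr}(G')\ge 5m'-c\,n'$ pointwise before taking expectations, together with $\mathbb{E}[\text{cr}(D')]=p^4\,\text{cr}(G)$ for the inherited subdrawing. With these in hand the optimization $p=\Theta(n/m)$ produces the threshold $6.95$ and constant $1/29$ as you indicate.
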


\begin{theorem}\label{thm:k-planar}
  There exists a function $f(k)$ such that every $k$-planar graph
  admits an $f(k)$-plane simple topological drawing, and there exists
  an algorithm to obtain an $f(k)$-plane simple topological drawing
  from a given $k$-plane drawing of a graph.
\end{theorem}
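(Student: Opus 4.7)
The plan is to apply Algorithm~1 to the given $k$-plane drawing $D_0$ of $G$ and bound the local crossing number of the resulting drawing $D'$ as a function of $k$. By Lemma~\ref{lem:termination}, Algorithm~1 terminates and outputs a simple topological drawing of $G$; by Corollary~\ref{cor:length}, every edge in $D'$ closely follows a path of length at most $k+1$ in the fixed network $N$. The strategy is then to use invariants~\ref{I2} and~\ref{I3} to localize the crossings on a single edge and count them combinatorially in $N$.

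Fix an edge $e$ in $D'$. By~\ref{I2}, every crossing on $e$ lies in a small neighborhood of a node of $N$ on the path of $e$. Since $D'$ is simple, no crossing involves $e$ at either of its two $G$-vertex endpoints, so all crossings on $e$ occur at the (at most $k$) internal nodes of its $N$-path; each such internal node is a crossing node of $N$ and therefore has degree exactly $4$ (it arose as the transversal intersection of two arcs in $D_0$). By~\ref{I3}, at each such internal node $v$ the number of edges crossing $e$ in the neighborhood of $v$ is at most $\mu(v)-1$, where $\mu(v)$ denotes the number of edges of $G$ whose $N$-path passes through $v$. Thus the number of crossings on $e$ is bounded by $k\cdot\max_v\mu(v)$.

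The key step is to bound $\mu(v)$ by a function of $k$. An edge of $G$ whose $N$-path passes through $v$ corresponds to a walk of length at most $k+1$ in $N$ that goes through $v$, whose two endpoints are $G$-vertices and whose internal nodes are all crossing nodes. At each internal crossing node the walk has at most three choices of outgoing segment (excluding the segment it entered on), so the number of half-walks of length $j$ that start at $v$ and do not backtrack is at most $4\cdot 3^{j-1}$. Summing over the position of $v$ in the walk and over the total length bounded by $k+1$ yields $\mu(v)\in O^*(3^k)$; since $G$ is simple, distinct edges of $G$ yield distinct walks, so this bound is valid. Combining with the previous paragraph gives at most $O^*(3^k)$ crossings on every edge of $D'$, establishing the theorem with $f(k)\in O^*(3^k)$.

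The main obstacle is the $\mu(v)$ bound: it crucially relies on the observation that only crossing nodes, which have degree $4$ in $N$, appear in the interior of a $G$-edge's $N$-path, so one gets $3$-way branching at each internal step even though $G$-vertices may have arbitrarily large degree in $N$. The algorithmic claim in the theorem follows from Lemma~\ref{lem:termination}: each iteration of Algorithm~1 strictly decreases the lexicographic pair (total length, total crossings), both of which are initially polynomial in $|V|$, $|E|$, and $k$, so the algorithm runs in polynomial time for fixed $k$.
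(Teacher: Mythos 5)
Your proof is correct and reaches the same exponential bound $f(k)\in O^*(3^k)$, but the key counting step is genuinely different from the paper's. The paper bounds, for each crossing node $\gamma$, the number $n_\gamma$ of vertices of $G$ reachable along $N$-paths of length at most $k$ from $\gamma$ by $4\cdot 3^{k-1}$ (using exactly the degree-4 observation you emphasize), and then invokes the Crossing Lemma (Lemma~\ref{thm:crossing_lemma}) on the local subgraph $G_\gamma$ to convert this vertex bound into an edge bound $m_\gamma\le \sqrt{29k/2}\,n_\gamma$; multiplying by the at most $k$ neighborhoods an edge passes through gives $f(k)\le \frac23\sqrt{58}\,k^{3/2}3^k$. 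You instead bound the number of edges through a node $v$ directly, by counting non-backtracking walks of length at most $k+1$ through $v$ whose interior consists of degree-4 crossing nodes, and appeal to the simplicity of $G$ to conclude distinct edges give distinct walks. This avoids the Crossing Lemma entirely and is arguably more elementary and self-contained, at the cost of a slightly larger polynomial factor (your count gives roughly $O(k\cdot 3^k)$ edges per node and hence $O(k^2 3^k)$ overall, versus the paper's $O(k^{1/2}3^k)$ per node and $O(k^{3/2}3^k)$ overall). One small point you pass over lightly but which deserves a sentence: after rerouting, the interior nodes of an edge's $N$-path are still all crossing nodes (rerouting only ever follows sub-arcs of other edges between crossings or shared endpoints, so no new $G$-vertices appear in the interior of a path); the paper is equally terse about this. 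Your polynomial-time claim also needs the observation that the total number of crossings remains polynomially bounded throughout (which follows from \ref{I3}--\ref{I4}), not merely that it is polynomial initially, since individual \textsc{Reroute} steps can increase it.
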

\begin{proof}
  The statement holds for $k\le 3$ with
  $f(k)=k$~\cite[Lemma~1.1]{PachRTT06}. Hence we may suppose that
  $k\ge 4$. Consider the drawing $D'$ returned by Algorithm~1, and a
  node $\gamma$ of the network $N$ that corresponds to a crossing. We
  analyse the subgraph $G_\gamma$ of $G$ formed by the edges of $G$
  that in $D'$ pass through a small neighborhood $\overline{\gamma}$
  of $\gamma$. Let $n_\gamma$ and $m_\gamma$ be the number of vertices
  and edges of $G_\gamma$, respectively.  By \ref{I4}, every edge in
  $D'$ corresponds to a path of length at most $k+1$ in $N$. If an
  edge $uv$ passes through $\overline{\gamma}$ in $D'$, then $N$
  contains a path of length at most $k$ from $\gamma$ to $u$ (resp.,
  $v$) in which internal vertices correspond to crossings in
  $D_0$. Every node in $N$ that corresponds to a crossing has degree
  4. Hence 
  the number of vertices reachable from $\gamma$
  on such a path is 
  $n_\gamma \leq 4\cdot 3^{k-1}$.

  We apply Lemma~\ref{thm:crossing_lemma} to the graph $G_\gamma$, and distinguish between
  two cases:
Either $m_\gamma < 6.95n$, otherwise
$m_\gamma \geq 6.95n$ and then $\text{cr}(G_\gamma) \geq \frac{1}{29} m_\gamma^3/n_\gamma^2$. Since $G_\gamma$ has $m_\gamma$ edges and each edge has at most $k$ crossings in $D$, we obtain $\frac{1}{29} m_\gamma^3/n_\gamma^2 \leq m_\gamma k/2$, which implies $m_\gamma \leq \sqrt{29k/2} \, n_\gamma$. The combination of both cases yields an upper bound $m_\gamma \leq \max\{6.95n_\gamma, \sqrt{29k/2} \, n_\gamma\}$. 
So, for $k \geq 4$ we have $m_\gamma \leq \sqrt{29k/2} \, n_\gamma$.
%

Since $m_\gamma$ edges pass through $\overline{\gamma}$, by invariant~\ref{I3} every edge passing through $\overline{\gamma}$ has at most  $m_\gamma-1$ crossings at $\overline{\gamma}$. By invariant~\ref{I4}, every edge in $G$ passes through (the neighborhood of) at most $k$ nodes of $N$. By \ref{I2}, an edge passing through $\overline{\gamma_1},\ldots ,\overline{\gamma_k}$ crosses at most $\sum_{i=1}^k(m_{\gamma_i}-1)$ edges in $D'$. Combining the upper bounds on $m_\gamma$ and $n_\gamma$, we obtain that every edge in the output drawing $D'$ has at most
$\sqrt{29k/2} \cdot 4k \cdot 3^{k-1}=\frac23 \sqrt{58} \cdot k^{3/2} \cdot 3^k$ crossings, for $k \geq 4$.
 \end{proof}

\section{An Upper Bound for 4-Planar Graphs}
\label{sec:8bound}

The function $f$ from our proof of Theorem~\ref{thm:k-planar} yields
\[
  f(4)=\frac23 \sqrt{58} \cdot 4^{3/2} \cdot 3^4\approx 3290.01
\]
and so every $4$-plane drawing can be transformed into a
$3290$-plane simple topological drawing. In this section we improve
this upper bound and show that $8$ crossings per edge suffice.

\begin{theorem} \label{theorem:4planar_8bound}
Every $4$-planar graph admits an $8$-plane simple topological drawing. Given a $4$-plane drawing of a graph with $n$ vertices, an $8$-plane simple topological drawing can be computed in $O(n)$ time.
\end{theorem}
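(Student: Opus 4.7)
The plan is to run a variant of Algorithm~1 on the input 4-plane drawing and then exploit the rigid local structure enforced by $k=4$ to bound the per-edge crossing count by $8$. I would first invoke Algorithm~1 (possibly with an additional tie-breaking rule in the lens selection) on $D_0$; by Lemma~\ref{lem:termination} the output $D'$ is a simple topological drawing, and by Corollary~\ref{cor:length} every edge of $D'$ follows a path of length at most $k+1=5$ in the planarization $N$, so it passes through at most four interior (crossing) nodes of $N$.

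Next I would perform a local analysis at each such interior node $\gamma$. Label the four segments incident to $\gamma$ as $a,b,c,d$ in cyclic order, with the two original edges of $D_0$ passing through $\gamma$ via the alternating direction pairs $\{a,c\}$ and $\{b,d\}$. Any edge of $D'$ through $\overline{\gamma}$ enters and exits via one of the six unordered pairs of incident directions, and by elementary planarity two such edges cross inside $\overline{\gamma}$ iff their pairs alternate in the cyclic order, which happens only for $\{a,c\}$ versus $\{b,d\}$. Consequently, an edge ``turning'' at $\gamma$ contributes no crossings there, while an edge going ``straight'' through $\gamma$ is crossed only by those edges going straight in the perpendicular direction. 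The remaining task is to bound the \emph{load} at each node: I would prove that at most two edges of $D'$ traverse $\gamma$ straight in each of the two straight directions, arguing by contradiction that a third such edge would share segments with one of the existing two and form a short lens which the algorithm's priority would have already eliminated.

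Combining these parts, each edge of $D'$ passes through at most $4$ interior nodes and contributes at most $2$ crossings per straight interior node, for a total of at most $4 \cdot 2 = 8$ crossings per edge. For the runtime, a 4-planar $n$-vertex graph has $O(n)$ edges and hence $O(n)$ crossings in $D_0$, so $N$ has linear size; each Swap or Reroute can be executed in amortized constant time, and the total number of operations is $O(n)$. The main obstacle is making the load bound rigorous: showing that Reroute operations cannot accumulate more than two parallel edges through any single direction of any single node likely requires a careful case analysis together with a tailored lens-selection priority beyond the basic shortest-first choice of Algorithm~1.
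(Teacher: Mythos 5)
Your high-level target is right: bound the number of interior nodes per edge by $4$ and the crossings per neighborhood by $2$, yielding $4\cdot 2 = 8$. The paper's Lemma~\ref{lem:lengths}(i) and (iv) establish exactly these two bounds. However, the route you sketch has a real gap, and the paper's argument is structured quite differently precisely to close it.

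The gap is the load bound, which you yourself flag as ``the main obstacle.'' Running Algorithm~1 (which uses only \textsc{Reroute}) on a $4$-plane drawing does \emph{not} keep the number of edges through a node's neighborhood bounded by a small constant; \textsc{Reroute} modifies only one edge but can funnel many edges through the same shorter arcs, and nothing in Algorithm~1 (nor a tie-breaking rule on lens selection) prevents several rerouted edges from stacking along the same segment. That is why the general Theorem~\ref{thm:k-planar} resorts to the Crossing Lemma to bound $m_\gamma$, and for $k=4$ this only gives roughly $\sqrt{58}\cdot n_\gamma$ with $n_\gamma \le 108$ --- orders of magnitude above $4$. The paper instead replaces Algorithm~1 by a three-phase Algorithm~2. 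Phase~1 uses \textsc{Swap} (not just \textsc{Reroute}): Lemma~\ref{lem:exchange} shows \textsc{Swap} never increases the per-edge crossing count, so Phase~1 kills every lens except the problematic ``$1$-$3$-lenses'' while remaining $4$-plane (Lemma~\ref{lem:pre}). Phase~2 reroutes each $1$-$3$-lens exactly once, and the crucial combinatorial fact (Lemma~\ref{lem:lengths}(ii)) is that the shorter arcs of distinct $1$-$3$-lenses cannot share a segment of $N$, so at most two edges ever run along any segment --- this is the proved analogue of your load bound, and it is obtained only because Phase~2 reroutes each edge at most once from a controlled $4$-plane starting point. Phase~3 then cleans up the specific new lenses ($0$-lenses and quasi-$0$-lenses) that Phase~2 can create, using \textsc{Quasi-0-Reroute} to avoid undoing the bound. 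Your cyclic-order observation about which direction-pairs cross inside $\overline{\gamma}$ is a nice local fact, but it does not substitute for the global control that the \textsc{Swap}-based Phase~1 and the once-per-edge discipline of Phase~2 provide. Without that machinery, the claim that ``a third such edge would form a short lens the algorithm would have already eliminated'' does not go through: Algorithm~1 can eliminate one lens while implicitly creating several coincident rerouted arcs elsewhere.
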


The proof of Theorem~\ref{theorem:4planar_8bound} is constructive: Let $D_0$ be a 4-plane drawing of a 4-planar graph $G=(V,E)$ with $n=|V|$ vertices and $m=|E|$ edges. The Crossing Lemma implies that a $k$-planar graph on $n$ vertices has at most $3.81\sqrt{k}n$ edges. For $k=4$, this implies $m\leq 7.62n$. (We note that
Ackerman~\cite{ackerman2019topological} proved a bound $m\leq 6n-12$ for $4$-plane simple topological drawings with $n\geq 3$ vertices; this bound is not applicable here.)

We want to 
eliminate all lenses 
using swap and reroute operations. We define three types of special lenses that we handle separately. A lens $L(e_{\alpha\beta},f_{\alpha\beta})$ is
\begin{itemize}
\item a \emph{0-lens} if $e_{\alpha\beta}$ has no crossings;
\item a \emph{quasi-0-lens} if the arc $e_{\alpha\beta}$ has exactly one
    crossing $\gamma$, where $e$ crosses an edge $h$,
    the edges $h$ and $f$ have a common endpoint $s$,
    and the arcs $f_{s\alpha}$ and $h_{s\alpha}$ cross the same edges in the same order
    (see \figurename~\ref{fig:quasi0lens-1} for an example);
\item a \emph{1-3-lens} if  $x(e)=4$, $x(e_{\alpha\beta})=1$, and $x(f_{\alpha\beta})=3$; see \figurename~\ref{fig:lemma_for_4_img1}.
\end{itemize}
We show that all lenses other than 
0-lenses and 1-3-lenses can be eliminated by swap operations while
maintaining a 4-plane drawing
(Lemma~\ref{lem:exchange}). 
And 0-lenses can easily be eliminated by reroute operations
(Lemma~\ref{lem:0lens}). The same holds for quasi-0-lenses
(Lemma~\ref{lem:q0lens}), which are of no particular concern in the
initial drawing but are important for the analysis of the last phase
of our algorithm. The main challenge is to eliminate 1-3-lenses, which
we do by rerouting the arc with 3 crossings along the arc with 1
crossing.

Our algorithm proceeds in three phases: Phase~1 
eliminates all lenses other than 1-3-lenses. We show that it maintains
a 4-plane drawing (Lemma~\ref{lem:pre}). Phase~2 eliminates every
1-3-lens using reroute operations. We show that this phase produces an
8-plane drawing. Phase~2 may also create new lenses, but only 0- and
quasi-0-lenses, which are eliminated in Phase~3 without creating any
new lenses.

\begin{figure}[bht]
  \centering
  \subfigure{
	 \label{fig:lemma_for_4_img1}
	\includegraphics[scale=1]
	{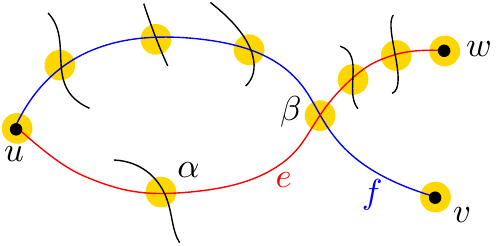}}
  \hfill
  \subfigure{
	 \label{fig:lemma_for_4_img2}
	\includegraphics[scale=1]
	{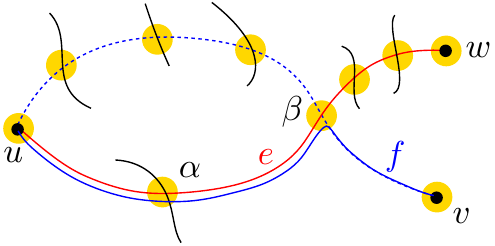}}

    \caption{\textsc{Reroute}$(e_{u\beta},f_{u\beta})$ applied to a
      1-3-lens $L(e_{u\beta},f_{u\beta})$.}
  \label{fig:lemma_for_4_figure}
\end{figure}

The initial 4-plane drawing has $O(n)$ crossings since the graph has $O(n)$ edges and each edge has at most four crossings. The set of lenses in the initial drawing can be identified in $O(n)$ time. Due to the elimination of a single lens, a constant number of other lenses can be affected, which can be computed in constant time. Further, each elimination operation strictly decreases the total number of crossings in the drawing. Consequently, Algorithm~2 performs $O(n)$ elimination operations and can be implemented in $O(n)$ time.

\begin{lemma}\label{lem:0lens}
Let 
$L(e_{\alpha\beta},f_{\alpha\beta})$ be a 0-lens. 
Then operation \textsc{Reroute}$(e_{\alpha\beta},f_{\alpha\beta})$ decreases the total number of crossings and does not create any new crossing.
Further, if any two edges have at most two points in common, then
it does not create any new lens.
\end{lemma}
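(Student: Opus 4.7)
The plan is to exploit the defining property of a $0$-lens: $e_{\alpha\beta}$ carries no crossings, so a sufficiently thin tubular neighborhood $T$ of $e_{\alpha\beta}$ (taken outside $\overline{\alpha}\cup\overline{\beta}$) meets no edge of $G$ other than $e$ itself; and by construction $\overline{\alpha}$ and $\overline{\beta}$ meet no edge other than $e$ and $f$. All three claims will follow by tracing the effect of the reroute separately inside $\overline{\alpha}\cup\overline{\beta}$, along $T$, and outside the rerouted region.

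First I would verify that no new crossings appear. Outside the rerouted region $f$ is unchanged. Between $\overline{\alpha}$ and $\overline{\beta}$ the new arc of $f$ lies in $T$ and runs parallel to $e$, so it crosses no edge on this segment. Within $\overline{\alpha}$ and $\overline{\beta}$ the redraw is by construction disjoint from $e$, and the disks meet no other edge. Since the underlying graph is simple and $e\neq f$, the endpoints of $e$ and $f$ cannot coincide in both $\alpha$ and $\beta$, so at least one of $\alpha,\beta$ is a proper crossing; the reroute removes it. Combining these, the total number of crossings strictly decreases and no new crossing is created.

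For the second claim I would invoke the global assumption that any two edges have at most two points in common. Together with what was just shown, for every pair of edges the set of common points (crossings and shared vertices) after the reroute is contained in the corresponding set before: crossings can only disappear, and shared vertices are untouched. A lens needs at least two common points, so pairs with $0$ or $1$ common points contribute no lens before or after. A pair with exactly two common points necessarily already formed a lens before the reroute, because with no further intersections between these two edges the two arcs joining the two common points are disjoint in their interiors. Such a pair either keeps both common points after the reroute, in which case it still forms a lens with the same boundary edges and boundary points (the possibly-modified portion of $f$ lies inside $T$ and therefore still avoids the other edge), or loses one of them, in which case it has no lens at all. Hence every lens after the reroute coincides, as a pair of edges at a pair of common points, with a lens that was already present, and no genuinely new lens appears.

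The main obstacle is the second claim: one must exclude the possibility that two edges not forming a lens before suddenly do so afterwards. The saving grace is the observation that two edges meeting in exactly two points automatically form a lens there, so a new lens could only be born out of new common points; the tubular-neighborhood reasoning used in the first part rules out any such new common point and thereby any new lens.
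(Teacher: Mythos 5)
Your proof is correct and follows essentially the same approach as the paper: no new crossings arise because the new portion of $f$ runs through a crossing-free region, at least one of the two boundary points $\alpha,\beta$ is a proper crossing that the reroute eliminates, and since a new lens would require two common points none of which is new, every lens after the operation corresponds to a lens that was already present. The paper's argument is more terse, but the key steps (crossing count drops; no new common points under the two-common-points hypothesis; hence no new lens) are identical to yours.
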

\begin{proof}
  The operation \textsc{Reroute}$(e_{\alpha\beta},f_{\alpha\beta})$
  modifies only the edge $f$, by rerouting the arc $f_{\alpha\beta}$
  to closely follow $e_{\alpha\beta}$. Since the arc $e_{\alpha\beta}$
  is crossing-free, the edge $f$ loses one of its crossings and no edge
  gains any new crossing. Overall, the total number of crossings
  decreases,
  as claimed.

  Assume that any two edges have at most two points in common before the operation.
  Consider a lens $L(g_{\gamma\delta},h_{\gamma\delta})$ in the drawing after the operation. As no new crossings are created, $\gamma$ and $\delta$ are already common points of $g$ and $h$ before the operation. Since $g$ and $h$ have no other common points by assumption, the lens $L(g_{\gamma\delta},h_{\gamma\delta})$ is already present before the operation. 
\end{proof}

For quasi-0-lenses we define the operation
\textsc{Quasi-0-Reroute}$(e_{\alpha\beta},f_{\alpha\beta})$ as follows; see \figurename~\ref{fig:quasilens}.
Let $h$ be the edge that crosses $e_{\alpha\beta}$ at $\gamma$ and shares an endpoint $s$ with $f$.
%
%
Redraw $f$ such that it closely follows $h$ from $s$ to $\overline{\gamma}$, it does not cross $e$ in $\overline{\gamma}$, and then it closely follows arc $e_{\alpha\beta}$ to $\overline{\beta}$, and further follows its original arc from $\overline{\beta}$ to $t$.
The analogue of Lemma~\ref{lem:0lens} reads as follows.

\begin{figure}[bht]
  \centering
  \subfigure{
	 \label{fig:quasi0lens-1}
	\includegraphics[scale=1]
	{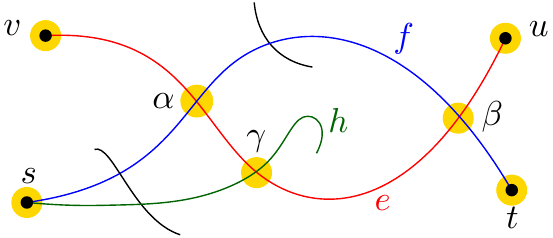}}
  \hfill
  \subfigure{
	 \label{fig:quasi0lens-2}
	\includegraphics[scale=1]
	{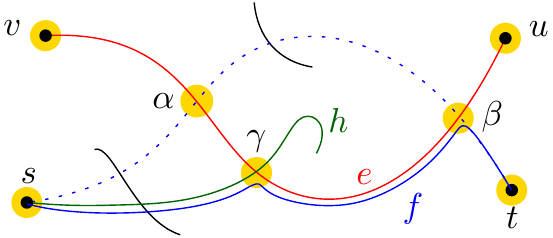}}

    \caption{\textsc{Quasi-0-Reroute}$(e_{\alpha\beta},f_{\alpha\beta})$ applied to a
      quasi-0-lens $L(e_{\alpha\beta},f_{\alpha\beta})$.}
  \label{fig:quasilens}
\end{figure}

\begin{lemma}\label{lem:q0lens}
  Let 
  $L(e_{\alpha\beta},f_{\alpha\beta})$ be a quasi-0-lens, where $h$
  denotes the edge that crosses $e_{\alpha\beta}$ at
  $\gamma$ and shares an endpoint $s$ with $f$. Then
  operation \textsc{Quasi-0-Reroute}$(e_{\alpha\beta},f_{\alpha\beta})$
 decreases the total number of crossings, and does not increase the number of
 crossings between any pair of edges.
Further, if any two edges have at most two points in common, then
it does not create any new lens.
\end{lemma}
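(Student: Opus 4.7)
The plan is to mirror the proof of Lemma~\ref{lem:0lens}, treating the quasi-0-lens as a generalized 0-lens whose ``crossing-free'' side is the concatenated arc $h_{s\gamma}\cup e_{\gamma\beta}$: by the definition of a quasi-0-lens, $f_{s\alpha}$ and $h_{s\gamma}$ cross the same edges in the same order, and $e_{\gamma\beta}$ carries no crossings. After checking that \textsc{Quasi-0-Reroute} produces a valid topological drawing (which is routine, since the new arc of $f$ closely follows existing arcs and any self-crossings are removed at the end), the core of the argument is an edge-by-edge comparison of crossings with $f$ before and after the operation.

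For each edge $g\notin\{e,h\}$, the rerouted portion of $f$ from $s$ to $\overline{\gamma}$ closely follows $h_{s\gamma}$ and therefore crosses $g$ at precisely those points where $h_{s\gamma}$ does; by the quasi-0-lens hypothesis this coincides in number and type with where $f_{s\alpha}$ previously crossed $g$. The portion from $\overline{\gamma}$ to $\overline{\beta}$ closely follows $e_{\gamma\beta}$ and contributes no crossings; the portion from $\overline{\beta}$ to $t$ is unchanged. Thus the number of crossings between $f$ and $g$ decreases by exactly the number of crossings of $f_{\alpha\beta}$ with $g$, which is nonnegative. For $g\in\{e,h\}$, the construction routes $f$ around the disks $\overline{\alpha}$, $\overline{\beta}$, and $\overline{\gamma}$, eliminating the crossings of $f$ with $e$ at $\alpha$ and $\beta$ and introducing no new crossing with either $e$ or $h$ in those disks. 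Since $L(e_{\alpha\beta},f_{\alpha\beta})$ is a lens, at least one of $\alpha,\beta$ is a genuine crossing that is now removed, so the total number of crossings strictly decreases while no pair of edges sees an increase.

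The no-new-lens claim then follows essentially verbatim from the argument in Lemma~\ref{lem:0lens}: any lens $L(g_{\gamma'\delta'},k_{\gamma'\delta'})$ present in the drawing after the reroute has bounding points $\gamma',\delta'$ that are common points of $g$ and $k$; because no pair of edges gains a new crossing in the operation, $\gamma'$ and $\delta'$ were already common to $g$ and $k$ beforehand, and the hypothesis that any two edges have at most two points in common then forces the lens to have existed already. I expect the main obstacle to be the careful local analysis around the point $\gamma$: one must verify that the combined path ``closely follow $h_{s\gamma}$, then switch to $e_{\gamma\beta}$'' inside the disk $\overline{\gamma}$ can be realized by a curve that avoids crossing either $e$ or $h$ there, which is possible by transversality of the crossing at $\gamma$ exactly as in the standard \textsc{Reroute} construction.
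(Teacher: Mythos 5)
Your analysis of the crossing counts is fine and matches the paper: for edges $g \notin \{e, h\}$ the shifted portion of $f$ picks up exactly as many crossings as $f_{s\alpha}$ had (by the ``same edges in the same order'' hypothesis), the middle portion along $e_{\gamma\beta}$ is crossing-free, the tail past $\overline{\beta}$ is unchanged, and the crossings at $\alpha$ and $\beta$ with $e$ disappear, with at least one of them being a genuine crossing. So ``total crossings strictly decrease'' and ``no pair's crossing count increases'' are correctly established.

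The no-new-lens argument, however, has a genuine gap. You claim that because no pair of edges gains a new crossing, the bounding points $\gamma'$ and $\delta'$ of any lens in the new drawing ``were already common to $g$ and $k$ beforehand,'' and then invoke the at-most-two-common-points hypothesis. This is the logic of Lemma~\ref{lem:0lens}, where it works because the reroute along a crossing-free arc genuinely creates no crossings, so the post-operation crossing set is a \emph{subset} of the pre-operation one. But \textsc{Quasi-0-Reroute} does \emph{not} have this property, and you implicitly acknowledge it earlier in your own proof when you say the rerouted $f$ ``crosses $g$ at precisely those points where $h_{s\gamma}$ does'': those are new \emph{locations}, near where $h$ crosses $g$, not where $f_{s\alpha}$ used to cross $g$. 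The operation only preserves the \emph{count} of crossings per pair, not their positions. Consequently, an endpoint of a post-operation lens that lies on the rerouted portion of $f'$ need not have been a common point of $f$ and $g$ before, and the argument collapses. The paper instead argues by contradiction in a different way: a new lens formed by $f'$ and some edge $g$ would give two common points of $f'$ and $g$, and it then shows $g$ must additionally cross $f_{\alpha\beta}$; injecting the two common points back to the old drawing gives three common points of $f$ and $g$, contradicting the hypothesis. You need an argument of that kind, not the 0-lens one transferred verbatim.

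A smaller point: for $g \in \{e, h\}$ you only address crossings inside the three disks $\overline{\alpha}$, $\overline{\beta}$, $\overline{\gamma}$, but the rerouted arc following $h_{s\gamma}$ may also pick up crossings with $e$ outside the disks (wherever $h_{s\gamma}$ crosses $e$); these are accounted for by the ``same edges in same order'' correspondence with $f_{s\alpha}$, but you should say so explicitly rather than restricting attention to the disks.
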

\begin{proof}
  The operation \textsc{Quasi-0-Reroute}$(e_{\alpha\beta},f_{\alpha\beta})$ modifies
  only the edge $f$, by rerouting the arc $f_{s\beta}$ to closely
  follow first $h$ from $s$ to $\overline{\gamma}$ and then
  $e_{\alpha\beta}$ to $\overline{\beta}$.
  Let $f'$ denote the new drawing of $f$. Since (1)~$e$ has at least one fewer crossing with $f'$ than with $f$,
  and (2)~every crossing of $f'$ along the arc between $s$ and $\overline{\gamma}$ corresponds to a crossing
  of $f$ along the arc from $s$ to $\overline{\alpha}$, the total number of crossings strictly
  decreases, and for each pair of edges the number of crossings between them does not increase, as claimed.

  Assume that any two edges have at most two points in common before the operation.
  Suppose \textsc{Quasi-0-Reroute}$(e_{\alpha\beta},f_{\alpha\beta})$ creates a new lens.
  This lens must be formed by $f'$ and another edge, say $g$. Then $f'$ and $g$ must have at least two points in common,
  and $g$ must cross $f_{\alpha\beta}$, implying that $f$ and $g$ have at least three points in common before the operation.
  However, by assumption, edges $f$ and $g$ have at most two points in common, which is a contradiction. Consequently, every lens in the resulting drawing corresponds to a lens in the original drawing, where the arc
  $f_{s\alpha}$ is shifted to the arc of $f'$ from $s$ to $\overline{\gamma}$.
\end{proof}

\begin{lemma}\label{lem:exchange}
  Let 
  $L(e_{\alpha\beta},f_{\alpha\beta})$ be a lens either bounded by
  nonadjacent arcs with $x(e_{\alpha\beta})\leq x(f_{\alpha\beta})\leq
  x(e_{\alpha\beta})+2$, or by adjacent arcs with
  $x(e_{\alpha\beta})\leq x(f_{\alpha\beta})\leq
  x(e_{\alpha\beta})+1$. Then the operation \textsc{Swap}$(e_{\alpha\beta},f_{\alpha\beta})$ produces a drawing in which the total number of crossings on each edge does not increase, and the total number of crossings decreases.
\end{lemma}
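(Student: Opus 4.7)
The plan is to track, under the \textsc{Swap} operation, how crossings are reallocated between $e$, $f$, and every other edge. I fix notation: let $a=x(e_{\alpha\beta})$, $b=x(f_{\alpha\beta})$, let $p$ denote the number of crossings on the portion of $e$ outside $e_{\alpha\beta}$, and $q$ the analogous quantity for $f$. Let $\epsilon\in\{1,2\}$ be the number of points in $\{\alpha,\beta\}$ that are transverse crossings of $e$ and $f$, so $\epsilon=2$ for nonadjacent arcs and $\epsilon=1$ for adjacent arcs. Then $x(e)=p+a+\epsilon$ and $x(f)=q+b+\epsilon$, and the hypothesis reads $a\le b\le a+\epsilon$.

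Next I would analyze the effect of \textsc{Swap}$(e_{\alpha\beta},f_{\alpha\beta})$. After step~(1), the new edge $e'$ is the outer portion of $e$ concatenated with $f_{\alpha\beta}$, and $f'$ is defined symmetrically. Step~(2) straightens the arcs inside the disks $\overline{\alpha}$ and $\overline{\beta}$; since these disks were chosen so small that no third edge enters them, the straightening introduces no crossings with other edges but removes the $\epsilon$ crossings between $e$ and $f$ that sat at $\{\alpha,\beta\}$. Step~(3) eliminates any self-intersections (arising, say, if the outer part of $e$ crossed $f_{\alpha\beta}$) by excising a loop, which can only decrease the number of crossings of $e'$ and $f'$ with every edge. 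Consequently $x(e')\le p+b$ and $x(f')\le q+a$, whence
\[
x(e')-x(e)\le b-a-\epsilon\le 0,\qquad x(f')-x(f)\le a-b-\epsilon\le -\epsilon.
\]
For any third edge $g$, the drawing $e'\cup f'$ is contained in the drawing $e\cup f$ (modulo the straightening and loop removal), so $g$'s crossings with $\{e',f'\}$ form a subset of its crossings with $\{e,f\}$, and in particular $x(g)$ does not increase. Finally, the total number of crossings strictly decreases by at least the $\epsilon\ge 1$ crossings of $e$ and $f$ at $\{\alpha,\beta\}$, which are eliminated while every surviving crossing was already present before the swap.

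The main thing that takes care is justifying the two topological claims underlying this arithmetic: that straightening in step~(2) introduces no new crossings (which uses the standing assumption that $\overline{\alpha},\overline{\beta}$ avoid all other edges, as fixed in the common setup preceding the \textsc{Swap} definition), and that the loop removal in step~(3) can only decrease crossing counts (which holds because we literally delete a portion of an edge, and any crossings on that portion vanish with it). With these in hand, the displayed inequalities immediately give both conclusions of the lemma.
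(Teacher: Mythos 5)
Your proof is correct and follows essentially the same approach as the paper's: you perform the same crossing count bookkeeping, with the only cosmetic difference being that you unify the two cases (both endpoints crossings vs.\ one a shared vertex) via the parameter $\epsilon$ rather than treating them in sequence, and you explicitly note that step~(3) loop removal can only help. The inequalities $x(e')\le p+b$ and $x(f')\le q+a$ are exactly the paper's equalities $x(e')=x(e)-\epsilon+(b-a)$ and $x(f')=x(f)-\epsilon+(a-b)$ relaxed to account for possible loop excision.
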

\begin{proof}
The operation \textsc{Swap}$(e_{\alpha\beta},f_{\alpha\beta})$ modifies only the edges $e$ and $f$, by exchanging arcs $e_{\alpha\beta}$ and $f_{\alpha\beta}$, and eliminating any crossing at the endpoints of these arcs. In particular, the number of crossings on other edges cannot increase. This already implies that the total number of crossings decreases.

Let $e'$ and $f'$ denote the new drawing of $e$ and $f$. If both $\alpha$ and $\beta$ are crossings, then both crossings are eliminated, hence $x(e')=x(e)-2+(x(f_{\alpha\beta})-x(e_{\alpha\beta}))\leq x(e)$ and
$x(f')=x(f)-2+(x(e_{\alpha\beta})-x(f_{\alpha\beta}))\leq x(f)-2$. If $\alpha$ or $\beta$ is a vertex of $G$,  then only one crossing is eliminated, hence $x(e')=x(e)-1+(x(f_{\alpha\beta})-x(e_{\alpha\beta}))\leq x(e)$ and
$x(f')=x(f)-1+(x(e_{\alpha\beta})-x(f_{\alpha\beta}))\leq x(f)-1$, as required.
\end{proof}

\begin{lemma}\label{lem:6}
Let $D$ be a 4-plane drawing of a graph, and let $L(e_{\alpha\beta},f_{\alpha\beta})$ be a lens with $x(e_{\alpha\beta})\leq x(f_{\alpha\beta})$.
\begin{enumerate}
\item If $x(f_{\alpha\beta})-x(e_{\alpha\beta})\geq 2$, then $L(e_{\alpha\beta},f_{\alpha\beta})$ is either a 0-lens or $x(e_{\alpha\beta})=1$ and $x(f_{\alpha\beta})=3$.
\item If $x(e_{\alpha\beta})=1$ and $x(f_{\alpha\beta})=3$, then $e_{\alpha\beta}$ and $f_{\alpha\beta}$ are adjacent arcs.
\end{enumerate}
\end{lemma}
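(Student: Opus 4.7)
The plan is to prove both parts by a single case analysis on whether $\alpha$ and $\beta$ are vertices of $G$ or crossings in $D$, using the $4$-plane hypothesis to bound $x(e_{\alpha\beta})$ and $x(f_{\alpha\beta})$. The accounting rests on the convention (the same one implicit in the formulas in Lemma~\ref{lem:exchange}) that an endpoint crossing between $e$ and $f$ at $\alpha$ or $\beta$ is counted in $x(e)$ and $x(f)$ but \emph{not} in $x(e_{\alpha\beta})$ or $x(f_{\alpha\beta})$. Since the underlying graph is simple, the distinct edges $e$ and $f$ share at most one endpoint, so at most one of $\alpha,\beta$ can be a vertex of $G$; this leaves exactly two geometric configurations to treat.

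In the \emph{independent arcs} case (both $\alpha$ and $\beta$ are crossings), each of $e$ and $f$ already has two crossings contributed by the lens boundary, so together with $x(e),x(f)\le 4$ we get $x(e_{\alpha\beta}),x(f_{\alpha\beta})\le 2$. The hypothesis $x(f_{\alpha\beta})-x(e_{\alpha\beta})\ge 2$ then forces $x(e_{\alpha\beta})=0$ and $x(f_{\alpha\beta})=2$, i.e.\ a 0-lens. In the \emph{adjacent arcs} case (exactly one of $\alpha,\beta$ is a vertex and the other a crossing) only one endpoint crossing is unavoidable, giving $x(e_{\alpha\beta}),x(f_{\alpha\beta})\le 3$; the pairs compatible with $x(e_{\alpha\beta})\le x(f_{\alpha\beta})$ and gap at least $2$ are $(0,2)$, $(0,3)$, and $(1,3)$, where the first two are 0-lenses and the last is precisely the exceptional pair in the statement. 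Combining the two cases yields part~(1).

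Part~(2) then follows immediately: the bound $x(f_{\alpha\beta})\le 2$ established in the independent-arcs case rules out $x(f_{\alpha\beta})=3$ there, so any lens with $x(e_{\alpha\beta})=1$ and $x(f_{\alpha\beta})=3$ must be bounded by adjacent arcs. I do not expect a serious obstacle here; the only care needed is consistently tracking which crossings count toward $x(e)$ and $x(f)$ versus $x(e_{\alpha\beta})$ and $x(f_{\alpha\beta})$, after which the $4$-plane constraint leaves just enough room to pin down the admissible pairs.
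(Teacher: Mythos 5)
Your proof is correct and takes essentially the same approach as the paper's: a case split on whether both of $\alpha,\beta$ are crossings or one is a vertex, using the $4$-plane bound to get $x(f_{\alpha\beta})\le x(f)-2\le 2$ or $x(f_{\alpha\beta})\le x(f)-1\le 3$ respectively, and then the gap condition pins down the admissible pairs. You make explicit a couple of points the paper leaves implicit (the crossing-counting convention and that at most one of $\alpha,\beta$ can be a vertex), but the argument is the same.
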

\begin{proof}
As $D$ is 4-plane, we have $x(e)\leq 4$ and $x(f)\leq 4$.
Assume first that both $\alpha$ and $\beta$ are crossings, and so $x(f_{\alpha\beta})\leq x(f)-2\leq 2$. Combined with  $x(f_{\alpha\beta})-x(e_{\alpha\beta})\geq 2$, this implies $x(e_{\alpha\beta})=0$,
hence $L(e_{\alpha\beta},f_{\alpha\beta})$ is a 0-lens. Assume next
that $\alpha$ or $\beta$ is a vertex in $G$. Then
$x(f_{\alpha\beta})\leq x(f)-1\leq 3$. With
$x(f_{\alpha\beta})-x(e_{\alpha\beta})\geq 2$, this implies
$x(e_{\alpha\beta})=0$, or $x(e_{\alpha\beta})=1$ and
$x(f_{\alpha\beta})=3$. 
\end{proof}

\begin{description}
\item[Algorithm~2.]
\item[Input.] Let $D_0$ be a 4-plane drawing of a graph $G=(V,E)$.
\item[Phase 1.] While there is a lens $L(e_{\alpha\beta},f_{\alpha\beta})$ that is not a 1-3-lens, do:\\
If it is a 0-lens, then \textsc{Reroute}$(e_{\alpha\beta},f_{\alpha\beta})$,
else \textsc{Swap}$(e_{\alpha\beta},f_{\alpha\beta})$.
\item[Phase~2.] Let $\mathcal{L}$ be the set of 1-3-lenses.
For every $L(e_{\alpha\beta},f_{\alpha\beta})\in \mathcal{L}$, if
neither $e_{\alpha\beta}$ nor $f_{\alpha\beta}$ has been modified in
previous iterations of Phase~2 (regardless of whether
$x(e_{\alpha\beta})$ or $x(f_{\alpha\beta})$ has changed), apply
\textsc{Reroute}$(e_{\alpha\beta},f_{\alpha\beta})$.
\item[Phase~3.] While there is a 0-lens
  $L(e_{\alpha\beta},f_{\alpha\beta})$, do:
  \textsc{Reroute}$(e_{\alpha\beta},f_{\alpha\beta})$.\\
  While there is a quasi-0-lens $L(e_{\alpha\beta},f_{\alpha\beta})$,
  do: \textsc{Quasi-0-Reroute}$(e_{\alpha\beta},f_{\alpha\beta})$.
\end{description}

For $i\in \{1,2,3\}$, let $D_i$ denote the drawing obtained at the end of Phase~$i$.
We analyse the three phases separately.

\begin{lemma}\label{lem:pre}
Phase~1 terminates, and $D_1$ is a 4-plane drawing in which every lens is a 1-3-lens,
and any two edges have at most two points in common.
\end{lemma}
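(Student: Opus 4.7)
The plan is to establish the three assertions of the lemma in turn: termination (together with the lens characterization), preservation of 4-planeness, and the common-point bound. Termination is immediate: both \textsc{Reroute} on a 0-lens (Lemma~\ref{lem:0lens}) and \textsc{Swap} on any non-0-lens (Observation~\ref{obs:exchange}) strictly decrease the total number of crossings in $D$, so the loop must end after finitely many steps, and upon exit the loop guard ensures every lens is a 1-3-lens.

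To show that each iteration preserves the 4-plane property, I would handle the two operations separately. \textsc{Reroute} on a 0-lens creates no new crossings by Lemma~\ref{lem:0lens}. For \textsc{Swap} applied to a lens $L(e_{\alpha\beta}, f_{\alpha\beta})$ that is neither a 0-lens nor a 1-3-lens, I assume WLOG $x(e_{\alpha\beta}) \leq x(f_{\alpha\beta})$ and use Lemma~\ref{lem:6} for the case split. If $x(f_{\alpha\beta}) - x(e_{\alpha\beta}) \leq 1$, or the gap is at most $2$ with nonadjacent arcs, then Lemma~\ref{lem:exchange} applies directly and no edge gains any crossing. The only remaining case, by Lemma~\ref{lem:6}, is $(x(e_{\alpha\beta}), x(f_{\alpha\beta})) = (1, 3)$ with adjacent arcs; since the lens is not a 1-3-lens we have $x(e) \leq 3$, and the adjacent Swap formulas give $x(e') = x(e) - 1 + 2 \leq 4$ and $x(f') = x(f) - 1 - 2 \leq 1$, preserving 4-planeness.

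The main obstacle is the common-point bound. I would argue by contradiction: suppose edges $e, f$ share at least three points in $D_1$. Since $G$ is simple, $e$ and $f$ share at most one endpoint, so at least two of these common points are genuine crossings of $e$ with $f$. Pick two such crossings $\alpha, \beta$ that are consecutive along $e$, so that the interior of the arc $e_{\alpha\beta}$ contains no further common point of $e$ and $f$; note that the shared endpoint, if any, cannot lie in this interior since it is an endpoint of $e$. This forces $e_{\alpha\beta}$ and $f_{\alpha\beta}$ to meet only at $\alpha$ and $\beta$, so they bound a lens with nonadjacent arcs (both endpoints being crossings). In a 4-plane drawing, $x(e_{\alpha\beta}) \leq x(e) - 2 \leq 2$ and $x(f_{\alpha\beta}) \leq x(f) - 2 \leq 2$, so this lens cannot be a 1-3-lens, contradicting the fact that every lens in $D_1$ is a 1-3-lens.
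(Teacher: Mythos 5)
Your proof is correct and handles termination and the preservation of the $4$-plane property essentially as the paper does: both use Lemma~\ref{lem:6} together with Lemma~\ref{lem:exchange}, plus the ad hoc calculation for the $(1,3)$-but-not-1-3-lens case where $x(e)\le 3$ and the swap raises $x(e)$ by one without exceeding~$4$. For the final assertion (at most two common points) you take a genuinely different and somewhat shorter route. The paper invokes Lemma~\ref{lemma:existence_of_lens} to extract \emph{some} lens between $e$ and $f$, which must be a 1-3-lens with $\alpha$ a common endpoint, then locates the third common point $\gamma$ in the interior of $f_{\alpha\beta}$, and finally applies Lemma~\ref{lemma:existence_of_lens} a second time to the sub-arcs between the two crossings $\beta$ and $\gamma$, obtaining a lens bounded by two crossings that cannot be a 1-3-lens. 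You shortcut this: since $e$ and $f$ share at most one endpoint, at least two of the three common points are crossings; choosing two that are consecutive along $e$ yields a lens bounded by two crossings directly, so $x(e_{\alpha\beta})\le x(e)-2\le 2$ and $x(f_{\alpha\beta})\le x(f)-2\le 2$, which is incompatible with a 1-3-lens (which requires $x(f_{\alpha\beta})=3$, and by Lemma~\ref{lem:6}(2), adjacent arcs). Both arguments ultimately hinge on exhibiting a lens bounded by two crossings and observing that such a lens in a $4$-plane drawing cannot be a 1-3-lens; yours reaches that lens in one step and avoids the intermediate case analysis, which makes it a bit cleaner to check.
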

\begin{proof}
  By Lemma~\ref{lem:0lens} and Observation~\ref{obs:exchange}, each
  iteration of the while loop reduces the total number of
  crossings. Since $D_0$ has at most $\frac12\cdot 4m\in O(n)$
  crossings, the while loop terminates after $O(n)$ iterations. By
  Lemma~\ref{lem:6} all lenses satisfy the conditions of
  Lemma~\ref{lem:exchange}, except for 0-lenses and lenses
  $L(e_{\alpha\beta},f_{\alpha\beta})$ with $x(e_{\alpha\beta})=1$ and
  $x(f_{\alpha\beta})=3$. Each lens of the latter type is either a
  1-3-lens, which remains untouched, or $x(e)< 4$ and the lens is
  eliminated by a swap operation. In this case, though the number of
  crossings on the edge $e$ increases, it does not exceed four and
  the total number of crossings in the drawing strictly decreases.
  In all other cases we can apply either Lemma~\ref{lem:0lens} or
  Lemma~\ref{lem:exchange} to conclude that each iteration maintains a
  4-plane drawing. By the end condition of the while loop, all lenses
  other than 1-3-lenses are eliminated.

  To prove the final statement, suppose to the contrary, two edges $e$
  and $f$ in $D_1$ have three or more points in common.
  By Lemma~\ref{lemma:existence_of_lens}, there exist arcs
  $e_{\alpha\beta}\subset e$ and $f_{\alpha\beta}\subset f$
  such that $L(e_{\alpha\beta},f_{\alpha\beta})$ is a lens, which is necessarily a 1-3-lens.
  We may assume without loss of generality that $x(e)=4$, $x(e_{\alpha\beta})=1$, and $x(f_{\alpha\beta})=3$.
  Denote by $\gamma$ a common point of $e$ and $f$ other than $\alpha$ and $\beta$.
  Since $D_1$ is a 4-plane drawing and $x(f_{\alpha\beta})=3$, we may assume that $\alpha$ is common endpoint of $e$ and $f$,
  furthermore $\gamma$ is a crossing in the interior of $f_{\alpha\beta}$.
  Since $e_{\alpha\beta}$ and $f_{\alpha\beta}$ form a lens, the arc $e_{\alpha\beta}$ cannot pass through $\gamma$.
  Hence $\gamma$ is a crossing between $f_{\alpha\beta}$ and $e\setminus e_{\alpha,\beta}$.
  By Lemma~\ref{lemma:existence_of_lens}, $e_{\beta\gamma}$ and $f_{\beta\gamma}$ form a lens,
  which is necessarily a 1-3-lens. However, $x(e_{\beta\gamma})\leq 2$ and $x(f_{\beta\gamma})\leq 2$,
  which is a contradiction.
\end{proof}

For the analysis of Phases~2 and 3, we introduce some notation. Let $N$ denote the planarization of $D_1$. Note that $N$ is a simple graph, since a double edge would correspond to a lens whose arcs are crossing-free (i.e., a 0-lens). Phases 2 and 3 apply only \textsc{Reroute} and \textsc{Quasi-0-Reroute} operations. Hence the resulting drawings satisfy invariants \ref{I1}--\ref{I3}.
For a node $\alpha$ of $N$, we denote by $\overline{\alpha}$ a small neighborhood of $\alpha$.
Recall that the length $\ell(a)$ of an arc $a$ along an edge of $G$ is the combinatorial length of the path in $N$ that the arc closely follows.

\begin{lemma}\label{lem:lengths}
  $D_2$ has the following properties:
  (i) the length of every edge is at most five;
  (ii) at most two edges of $G$ pass along every segment of $N$;
  (iii)~through every node $\nu$ of $N$, at most two rerouted edges of $G$ pass through $\nu$; and
  (iv) at each node $\alpha$ of $N$, an edge passing through
    $\overline{\alpha}$ crosses at most two edges in $\overline{\alpha}$;
  (v)~any two edges have at most two points in common.
\end{lemma}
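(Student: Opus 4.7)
The plan is to exploit the rigid structure of 1-3-lenses, imposed by the 4-planarity of $D_1$ and Lemma~\ref{lem:6}, together with Phase~2's ``modified arc'' skip condition, so that each reroute has a well-understood local effect in the network $N$. I will first show that every 1-3-lens $L(e_{\alpha\beta},f_{\alpha\beta})\in\mathcal{L}$ has a rigid form: by Lemma~\ref{lem:6}(2) the arcs are adjacent, and $x(f_{\alpha\beta})=3$ combined with $x(f)\le 4$ and the simple-graph hypothesis (no multi-edges) forces exactly one of $\alpha,\beta$ (WLOG $\alpha$) to be the common vertex $\alpha=s$, while the other is a crossing of $e$ and $f$; moreover $x(e)=x(f)=4$ and $x(f_{\beta t})=0$. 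Thus the target arc $e_{\alpha\beta}$ consists of the first two segments of $e$ at the $\alpha$-end (with unique interior crossing $c_1$), and the rerouted $f$ follows the length-3 path $\alpha-c_1-\beta-t$. This rigidity gives three structural observations: \textbf{(A)} each 1-3-lens is uniquely determined by its target arc $e_{\alpha\beta}$, since $\beta$ is the second crossing on $e$ from $\alpha$ and $f$ is the unique edge crossing $e$ at $\beta$; \textbf{(B)} each edge $e$ contributes at most two target arcs in total, one at each endpoint, and since $x(e)=4$ these arcs cover disjoint sets of segments in $N$; \textbf{(C)} each edge is rerouted at most once in Phase~2, because any reroute replaces the entire edge and subsequent lenses containing a sub-arc of it are skipped by Phase~2's condition.

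Properties (i)--(iv) then follow with short arguments. For (i), each reroute replaces the length-4 arc $f_{\alpha\beta}$ by the length-2 target arc, strictly shortening $f$; by (C) each edge is shortened at most once, starting from length $\le 5$ in $D_1$. For (ii), each segment of $N$ belongs to a unique edge of $D_1$, and by (B) lies in at most one target arc, so at most one rerouted edge follows it in addition to the original. For (iii), interpreting ``passes through $\nu$'' as $\nu$ being an interior node of the edge's path, the claim is vacuous at vertex nodes; at a crossing $\nu$ of $e_1,e_2$, only target arcs of $e_1$ or $e_2$ can have $\nu$ as interior, yielding at most two rerouted edges. For (iv), at a crossing $\alpha$ of $e_1,e_2$ the (at most four) edges through $\overline{\alpha}$ partition into those following $e_1$ and those following $e_2$; edges within a group run parallel without crossing, and pairs across groups cross exactly once, so each edge has at most two crossings in $\overline{\alpha}$.

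The main obstacle is property~(v). I plan to argue by cases on how many of the two edges $a,b$ of $D_2$ are rerouted. If neither is rerouted, the bound is inherited from Lemma~\ref{lem:pre}. If exactly one, say $a$, is rerouted, then every common point of $a$ and $b$ lies at one of the four nodes $\alpha^{(a)},c_1^{(a)},\beta^{(a)},t_a$ of $a$'s new path; enumerating the possible identities of $b$ producing a coincidence (incident to $\alpha^{(a)}$ or $t_a$, equal to $e^{(a)}$, or equal to the partner $e'^{(a)}$ of $e^{(a)}$ at $c_1^{(a)}$), and using the simple-graph hypothesis to exclude two simultaneous vertex coincidences, bounds the count by two. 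The case where both $a$ and $b$ are rerouted is the most delicate: coincidences can occur at shared common vertices, at shared interior crossings of target arcs, or at shared endpoints of target arcs. I expect the main technical difficulty to lie here, requiring observations (A), (B) and (iii) in tandem to show that at most two nodes of $a$'s new path coincide with nodes of $b$'s new path.
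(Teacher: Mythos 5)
Your treatment of (i)–(iv) follows the same structure as the paper's proof: the same "rigid form" of 1-3-lenses (one endpoint of the lens is the shared vertex, $x(e)=x(f)=4$, the target arc is the first two segments of $e$, and the rerouted $f$ has length three), and the same key object (your observations (A)–(C) are essentially the paper's set $\mathcal{A}$ of target paths and the fact that distinct target paths are segment-disjoint). Minor imprecisions aside — in (iii) a rerouted edge passes through $\nu$ also when $\nu$ is the \emph{endpoint} $\beta$ of its target arc, not just an interior node, and in (iv) the "partition into two parallel groups" is only literally true in the four-edge case, which is why the paper first disposes of the $\leq 3$-edge case and then argues that four edges force $\gamma$ to be the \emph{middle} node of two target arcs — the conclusions for (i)–(iv) do follow from what you set up.

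The genuine gap is property (v), which you explicitly leave unfinished. For the "exactly one rerouted" case you propose an enumeration over possible identities of $b$; this could be made to work but is far more involved than needed. For the "both rerouted" case you only say that (A), (B) and (iii) should be used "in tandem", without naming the decisive step. The paper's argument is short and uniform: if $f_1,f_2$ have three common points, at most one is a shared endpoint, so they must cross in both $\overline{\gamma}$ and $\overline{\delta}$ where $(v,\gamma,\delta,u)$ is the rerouted path of $f_1$. But after rerouting, $f_1$ crosses \emph{no} unmodified edge in $\overline{\delta}$ (only $e$ and $f_1$ were there in $D_1$, and $f_1$ now runs parallel to $e$), so $f_2$ must also have been rerouted. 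Then both $f_1$ and $f_2$ are length-$3$ paths through the crossings $\gamma$ and $\delta$; since $N$ is simple, both must use the segment $\gamma\delta$, and since both are rerouted this contradicts property (ii), which you already proved. The missing idea is precisely this reduction to (ii); your sketch does not identify it, and as written the proposal does not establish (v).
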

\begin{proof}
\textbf{(i)} By Lemma~\ref{lem:pre}, the drawing $D_1$ is a 4-plane drawing. Therefore, every edge in $D_1$ passes through at most 4 crossings, hence its length is at most 5. Each \textsc{Reroute} operation in Phase~2 replaces an edge of length 5 with an edge of length 3 (cf.~\figurename~\ref{fig:lemma_for_4_img1}). Property (i) follows.

\noindent\textbf{(ii)}
Each \textsc{Reroute}$(e_{\alpha\beta},f_{\alpha\beta})$ operation in Phase~2 reroutes the longer arc along the shorter arc of a 1-3-lens in $\mathcal{L}$. Let $\mathcal{A}$ be the set paths of length 2 in $N$ that correspond to shorter arcs $e_{\alpha\beta}$ in some 1-3-lens $L(e_{\alpha\beta},f_{\alpha\beta})\in \mathcal{L}$.
By the definition of 1-3-lenses, $\ell(e)=5$ and $e_{\alpha\beta}$ consists of the first two segments of $N$ along $e$. Thus every segment $\gamma\delta$ of $N$ is contained in at most one path in $\mathcal{A}$.
Consequently, at most one new edge can pass along $\gamma\delta$ due to reroute operations.
%

\noindent\textbf{(iii)} Let $\gamma$ be a node in $N$ that corresponds to a crossing in the drawing $D_1$.
Then $\gamma$ is incident to at most two paths in $\mathcal{A}$ (at most one along each of the two edges that cross at $\gamma$). Hence at most two rerouted edges can pass through $\gamma$.

\noindent\textbf{(iv)}
Let $\gamma$ be a node of $N$, and let $e$ be an edge that passes
through $\overline{\gamma}$ in $D_2$.  By property~(ii), at most $4$
edges pass through $\overline{\gamma}$. If at most $3$ edges pass
through $\overline{\gamma}$, then it is clear that $e$ crosses at most
two edges in $\overline{\gamma}$. Suppose that four edges pass through
$\overline{\gamma}$. Then the four segments of $N$ incident to
$\gamma$ are each contained in the shorter arc of some 1-3-lens in
$D_1$. Consequently, $\gamma$ is the middle vertex of two distinct arcs
in $\mathcal{A}$. In the drawing $D_2$ (after \textsc{Reroute} operations),
two edges run in parallel in each of these shorter arcs.
Hence each edge that passes through $\overline{\gamma}$
crosses at most two other edges in $\overline{\gamma}$, as claimed.

\noindent\textbf{(v)} Suppose $f_1$ and $f_2$ have three points in common in $D_2$. By Lemma~\ref{lem:pre}, we may assume that $f_1$ has been rerouted in Phase~2, and $f_1$ follows a path $(v,\gamma,\delta,u)$ in $N$ and $(v,\gamma,\delta)\in \mathcal{A}$. Since $f_1$ and $f_2$ have at most one common endpoint, they cross in both $\overline{\gamma}$ and $\overline{\delta}$.
After the rerouting operation, $f_1$ does not cross any edge of $D_1$ in $\overline{\delta}$, which implies that $f_2$ has also been rerouted in Phase~2.
Since both $f_1$ and $f_2$ have length three and pass through $\overline{\gamma}$ and $\overline{\delta}$, and $N$ is a simple graph, both $f_1$ and $f_2$ pass along segment $\gamma\delta$,
which contradicts the fact that at most one new edge can pass along $\gamma\delta$ (see the proof of~(ii) above).
\end{proof}

\begin{corollary}\label{cor:8plane}
$D_2$ is an 8-plane drawing of $G$.
\end{corollary}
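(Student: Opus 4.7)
The plan is to simply collect the bounds provided by Lemma~\ref{lem:lengths} and combine them via invariant~\ref{I2}. Concretely, I would fix an arbitrary edge $e$ of $G$ in the drawing $D_2$ and count its crossings by localizing them at the small neighborhoods $\overline{\gamma}$ of the nodes of $N$ through which $e$ passes.

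First, by Lemma~\ref{lem:lengths}(i), the path in $N$ that $e$ closely follows has length at most five. Since the two endpoints of $e$ are nodes of $N$ corresponding to vertices of $G$ (not crossings), this path contains at most four internal nodes of $N$, each of which corresponds to a crossing of the original drawing $D_1$. These at most four neighborhoods are the only places where $e$ can meet another edge: by invariant~\ref{I2} (which is preserved throughout Phases~2 and~3, since only \textsc{Reroute} and \textsc{Quasi-0-Reroute} operations are applied), every crossing of $e$ with another edge of $D_2$ lies in the neighborhood $\overline{\gamma}$ of some node $\gamma$ of $N$ through which $e$ passes.

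Next, at each such node $\gamma$, Lemma~\ref{lem:lengths}(iv) guarantees that $e$ crosses at most two other edges inside $\overline{\gamma}$. Multiplying, the total number of crossings on $e$ in $D_2$ is at most $4 \cdot 2 = 8$. Since $e$ was arbitrary, $D_2$ is an $8$-plane drawing of $G$.

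There is essentially no obstacle here: the work has already been done in Lemma~\ref{lem:lengths}, and the corollary amounts to the product bound ``(length $-1$) $\times$ (crossings per neighborhood).'' The only subtlety worth stating explicitly is that the endpoints of $e$ are vertex-nodes of $N$ rather than crossing-nodes, so only the internal nodes of the length-$\leq 5$ path contribute crossings, giving at most four contributing neighborhoods rather than five.
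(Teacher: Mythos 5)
Your proof is correct and matches the paper's proof essentially verbatim: both multiply the bound of at most four crossing-nodes per edge (Lemma~\ref{lem:lengths}(i)) by the bound of at most two other edges met per neighborhood (Lemma~\ref{lem:lengths}(iv)). The only step you leave implicit is that ``crosses at most two other edges in $\overline{\gamma}$'' yields ``at most two crossings in $\overline{\gamma}$'' because, by invariant~\ref{I3}, every pair of edges crosses at most once within each such neighborhood; the paper cites~\ref{I3} explicitly to close this small gap, and you should too, since Lemma~\ref{lem:lengths}(iv) bounds the number of distinct edges met, not the number of crossings.
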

\begin{proof}
Every edge of $G$ passes through the small neighborhood of at most four nodes of $N$ by Lemma~\ref{lem:lengths}(i). In each such neighborhood, it crosses at most two other edges by Lemma~\ref{lem:lengths}(iv), and it has at most one crossing with each by \ref{I3}.
Overall, every edge has at most eight crossings in $D_2$.
\end{proof}

Unfortunately, Phase~2 may create new lenses, but only of very
specific types. 
We analyze these types and argue that all remaining lenses are removed. 

\begin{lemma}\label{lem:phase3}
Phase~3 terminates 
with an 8-plane simple topological drawing $D_3$.
\end{lemma}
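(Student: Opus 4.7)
The plan is to prove three things about Phase~3: it terminates, it preserves the 8-plane property from Corollary~\ref{cor:8plane}, and it removes every remaining lens, so that by Lemma~\ref{lemma:existence_of_lens} the output drawing $D_3$ is simple.

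Termination and 8-planeness are straightforward. Each iteration of either while loop applies \textsc{Reroute} to a 0-lens or \textsc{Quasi-0-Reroute} to a quasi-0-lens, and by Lemmas~\ref{lem:0lens} and~\ref{lem:q0lens} the total number of crossings strictly decreases; since $D_2$ has $O(n)$ crossings, each loop terminates in $O(n)$ iterations. Moreover, Lemma~\ref{lem:0lens} says no edge gains a crossing when a 0-lens is rerouted, and Lemma~\ref{lem:q0lens} says the number of crossings between any pair of edges does not increase under \textsc{Quasi-0-Reroute}. Hence the per-edge crossing count is non-increasing throughout Phase~3, so the 8-plane bound from Corollary~\ref{cor:8plane} is inherited by $D_3$. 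Since neither operation ever adds a crossing between a pair of edges, Lemma~\ref{lem:lengths}(v)---``any two edges share at most two points''---is also preserved through every intermediate drawing of Phase~3.

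The heart of the argument is showing that all lenses eventually get eliminated, and the main obstacle is the structural claim that every lens in $D_2$ (and hence in every intermediate drawing of Phase~3) is either a 0-lens or a quasi-0-lens. I would prove this by classifying how a lens can arise from Phase~2: a lens that is not already inherited from $D_1$ must involve a rerouted edge $f'$ closely following some length-two arc $e_{\alpha\beta}\in\mathcal{A}$, whose unique crossing $\gamma$ lies on an edge $h$ incident to the endpoint $s$ of $f'$ that $f'$ inherited from the 1-3-lens. If $f'$ shares two common points with some other edge $g$, the constraints from Lemma~\ref{lem:lengths}(i)--(iv) on where and how $g$ can meet $f'$ inside the neighborhoods of the three nodes on $e_{\alpha\beta}$ leave only two possibilities: either the shorter arc of the induced lens carries no crossing (a 0-lens), or $g$ passes through $\gamma$ along $h$ to $s$ while tracing exactly the sequence of crossings that $f_{s\alpha}$ had before rerouting, which is precisely the quasi-0-lens configuration of \figurename~\ref{fig:quasi0lens-1}. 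Once this classification is in hand, the two while loops finish the proof: the first eliminates all 0-lenses without creating new lenses (Lemma~\ref{lem:0lens}, using the two-common-points invariant), leaving only quasi-0-lenses; the second eliminates all quasi-0-lenses without creating new lenses (Lemma~\ref{lem:q0lens}). Therefore $D_3$ contains no lens, so by Lemma~\ref{lemma:existence_of_lens} every pair of edges shares at most one point and no two adjacent edges cross, yielding the claimed 8-plane simple topological drawing.
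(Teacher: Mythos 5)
Your high-level strategy matches the paper's: establish termination and 8-planeness from Lemmas~\ref{lem:0lens}, \ref{lem:q0lens}, and Corollary~\ref{cor:8plane}, preserve the ``two common points'' invariant through Phase~3, and then argue that every remaining lens is removed. That part of your write-up is fine. The gap is in the structural heart of the argument, which you explicitly defer: you assert that ``every lens in $D_2$ (and hence in every intermediate drawing of Phase~3) is either a 0-lens or a quasi-0-lens,'' but you give only a one-sentence gesture at the case analysis that would establish it, and as stated the claim is actually stronger than what the paper proves. In the paper's own analysis there is a configuration (the one in \figurename~\ref{fig:interlock4}, where the rerouted edge $f$ crosses two edges $e,h$ at $\overline{\alpha}$ that share an endpoint $u$ or $w$ with each other) in which the lens between $f$ and the \emph{second} of these edges need not be a 0-lens or quasi-0-lens in $D_2$ at all: the paper instead argues that after the first (which is a 0-lens) is eliminated in Phase~3, the second ``either disappears, or it becomes a 0-lens.'' So the correctness of Phase~3 genuinely relies on tracking how lenses evolve during Phase~3, not on a static classification of the lenses present in $D_2$.

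Concretely, what is missing is the multi-way case analysis that the paper spends most of the proof on: (a) whether $f$ crosses the other edge in $\overline{\beta}$ (which forces a 0-lens because of Lemma~\ref{lem:lengths}(ii)) or only in $\overline{\alpha}$; (b) whether $f$ crosses one or two edges at $\overline{\alpha}$ (using Lemma~\ref{lem:lengths}(iv)); (c) in the two-edge case, which endpoints the edges $e$ and $h$ share with each other and with $f$ (three sub-cases), and the observation, derived from Lemma~\ref{lem:lengths}(iii), that $e$ and $h$ themselves cannot cross. Your sketch mentions only the quasi-0-lens outcome tied to $\gamma$ and $h$, which corresponds to a single branch of this tree. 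Without carrying out the remaining branches and replacing the too-strong static claim with the dynamic ``becomes a 0-lens during Phase~3'' argument, the proposal does not constitute a proof.
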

\begin{proof}
  The while loops in Phase~3 terminate, as each iteration decreases
  the number of crossings by
  Lemmas~\ref{lem:0lens} and~\ref{lem:q0lens}. The drawing $D_2$ at the
  beginning of Phase~3 is 8-plane by Corollary~\ref{cor:8plane}, and
  remains 8-plane and no new lens is created by
  Lemmas~\ref{lem:0lens} and~\ref{lem:q0lens}.
  It remains to show that Phase~3 
  eliminates all lenses
  of $D_2$.

  Every lens in $D_1$ is a 1-3-lens by Lemma~\ref{lem:pre}, and they
  are all in $\mathcal{L}$. Phase~2 modifies an arc in every lens in
  $\mathcal{L}$. Thus the lenses of $D_1$ are no longer present in
  $D_2$. (The two edges that form a lens $L\in\mathcal{L}$ may still
  form a lens $L'$ in $D_2$, but technically this is a new lens, that
  is, $L\ne L'$, which is created in Phase~2 and will be discussed
  next.)

  We classify the new lenses created in Phase~2.
  Assume that edges $e$ and $f=uw$ form a lens in $D_2$. Without loss
  of generality, the edge $f$ was modified in Phase~2. Each iteration
  in Phase~2 applies a reroute operation on a 1-3-lens, which
  decreases the length of an edge from 5 to 3. Therefore Phase~2
  modifies every edge at most once. The drawing of edge $f$ in $D_2$
  was produced by a \textsc{Reroute}$(g_{u\beta},f_{u\beta})$
  operation, for some edge $g$, where $u$ is a common endpoint of $f$
  and $g$. The resulting drawing of $f$ in $D_2$ closely follows a
  path $(u,\alpha,\beta)$ in $N$ and then the original arc
  (in $D_1$) from $\overline{\beta}$ to $w$.
  After operation \textsc{Reroute}$(g_{u\beta},f_{u\beta})$,
  edges $f$ and $g$ do not cross each other.

  Suppose first that $f$ crosses $e$ in $\overline{\beta}$. Then $e$
  was redrawn in Phase~2 to closely follow $f$ from $w$ to
  $\overline{\beta}$ and beyond; as in
  \figurename~\ref{fig:interlock1}. However, in this case, $e$ and $f$
  have a common endpoint at $w$. No other edges follow segment $\beta w$ in $N$
  by Lemma~\ref{lem:lengths}(ii), hence $e$ and $f$ form a 0-lens.
  All such 0-lenses are eliminated in Phase~3, without creating any new lenses
  (cf.~Lemma~\ref{lem:0lens}). Therefore, we may assume that $f$ does
  not cross any edge in $\overline{\beta}$.

\begin{figure}[ht]
  \centering
  \includegraphics[scale=1]
  {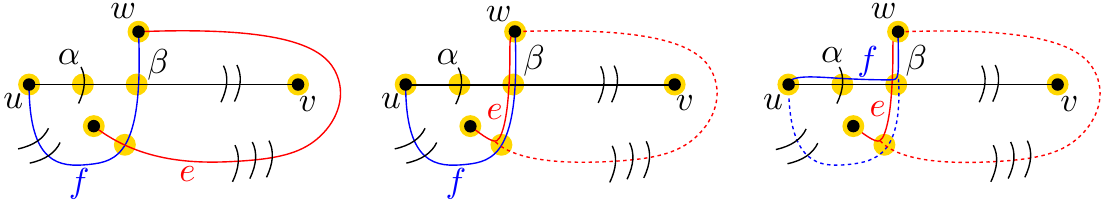}
  \caption{New 0-lens formed by $e$ and $f$ crossing in $\overline{\beta}$.} 
  \label{fig:interlock1}
\end{figure}

By Lemma~\ref{lem:lengths}(iv), the edge $f$ crosses at most two other
edges in $\overline{\alpha}$. If it crosses exactly one other edge,
and $f$ forms a lens $L$ with that edge, then this crossing in
$\overline{\alpha}$ is the only crossing of $f$ in $D_2$ and, thus,
$L$ is a 0-lens. Otherwise, $f$ crosses two edges, denote them by $e$
(for which we know that it crosses $f$) and $h$; one of them was
redrawn in a \textsc{Reroute} operation in Phase~2 to closely follow
the other, which passes through $\overline{\alpha}$; see
\figurename~\ref{fig:interlock4} and~\ref{fig:interlock3}. Therefore, $e$
and $h$ are adjacent, and they do not cross at the end of that
operation. Thus, they do not cross in $D_2$, either; otherwise, three
rerouted edges would pass through $\overline{\alpha}$, contradicting
Lemma~\ref{lem:lengths}(iii). As no new crossing is introduced in
Phase~3, the edges $e$ and $h$ do not cross anytime during (and after)
Phase~3, either.


\begin{figure}[thb]
  \centering
  \includegraphics[scale=1]
  {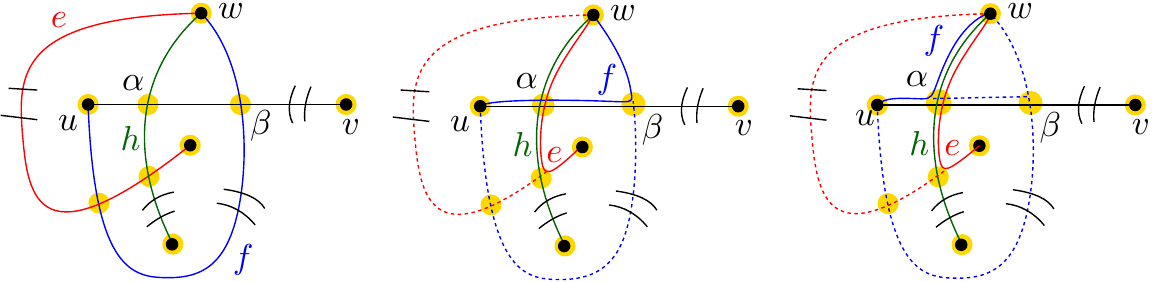}
  \caption{$f$ crosses two edges at $\overline{\alpha}$ and forms two 0-lenses.}
  \label{fig:interlock4}
\end{figure}

If the common endpoint of $e$ and $h$ is $u$ or $w$ (see
\figurename~\ref{fig:interlock4}), then both $e$ and $h$ form a lens
with $f$: One of these lenses is a 0-lens, and when this lens is
eliminated, the other lens either disappears, or it becomes a 0-lens
as well. Hence Phase~3 eliminates both crossings.

If $e$ and $h$ share distinct endpoints with $f$, without loss of
generality $e$ and $f$ are adjacent at $u$ and $h$ and $f$ are
adjacent at $w$. As $e$ and $h$ do not cross, the crossing $e\cap f$
is closer to $u$ and the crossing $h\cap f$ is closer to $w$ along
$f$. Hence, $e$ and $h$ each form a 0-lens with $f$, both of which are
eliminated in Phase~3.

\begin{figure}[ht]
  \centering
  \includegraphics[scale=1]
  {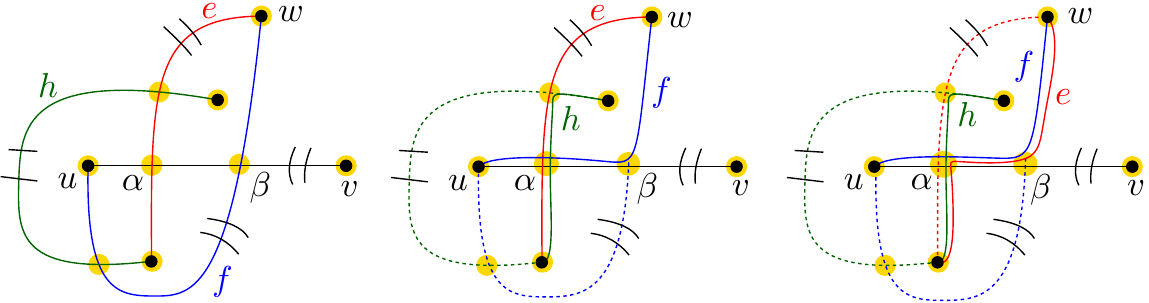}
  \caption{$f$ crosses two edges at $\overline{\alpha}$ and forms a quasi-0-lens.}
  \label{fig:interlock3}
\end{figure}

It remains to consider the case that only $e$ is adjacent to $f$
(while $h$ is not). Assume first that $e$ and $f$ are adjacent at $w$
(see \figurename~\ref{fig:interlock3}). If the crossing $e\cap f$ is
closer to $w$ along $f$ than the crossing $h\cap f$, then the lens
formed by $e$ and $f$ is a 0-lens; else it forms a quasi-0-lens. In
any case, the lens is eliminated in Phase~3. The same argument works
in case that $e$ and $f$ are adjacent at $u$.
\end{proof}


\clearpage\appendix

\section{Proof of Lemma~\ref{lemma:existence_of_lens}}

\lemmaone*\begin{proof}
Suppose, to the contrary, that there are no two intersection points that determine a lens. Then the arcs between any two intersection points cross each other. Consider two intersection points, $\alpha$ and $\beta$, for which the arcs $e_{\alpha\beta}\cup f_{\alpha\beta}$ have the minimum number of crossings. Let $\gamma$ be one of these crossings. Then the arcs $e_{\alpha\gamma}\subset e_{\alpha\beta}$ and $f_{\alpha\gamma}\subset f_{\alpha\beta}$ have fewer crossings, contradicting the minimality assumption in the choice of $\alpha$ and $\beta$.
\end{proof}

\section{Proof of Observation~\ref{obs:exchange}}

\obsone*\begin{proof}
  The swap operation modifies only two edges, and step~(3) removes any
  self-in\-ter\-sec\-tions introduced in previous steps. Consequently,
  the output is a topological drawing.  The operation eliminates
  crossings at both $\alpha$ and $\beta$. At least one of $\alpha$ and
  $\beta$ is a crossing (one of them may be a common vertex), so at
  least one crossing is eliminated. On the other hand, the operation
  does not create any new crossings, hence the total number of
  crossings drops by at least one.
\end{proof}

\section{Proof of Observation~\ref{obs:reroute}}

\obstwo*\begin{proof}
  Only one edge is modified and Step~(3) removes any
  self-in\-ter\-sec\-tions introduced in previous steps. Thus the
  output is a topological drawing.
\end{proof}

\section{Proof of Observation~\ref{obs:length}}
\obsthree*\begin{proof}
By construction, \textsc{Reroute} operations maintain invariants \ref{I1}--\ref{I3}.
Initially, $D_0$ is a $k$-plane drawing, so every edge corresponds to a path of length at most $k+1$ in the planarization $N$ of $D_0$. Each iteration of the while loop replaces an arc with another arc of the same or smaller length. The claim follows.
\end{proof}

\section{Remarks about Algorithm~1}


\noindent\textbf{(1)} Constant-factor improvements to the bound
$f(k)\leq \frac23 \sqrt{58} \cdot k^{3/2}\cdot 3^k$ are not difficult to
obtain, at the expense of making the algorithm and analysis more complicated. We mention two possible improvements.

\noindent (a)
In the bound $n_\gamma \leq 4\cdot 3^{k-2}$, we assume that \emph{all} endpoints of the edges that pass through $\overline{\gamma}$ are at distance $k$ from $\gamma$ in $N$, which is impossible. However, one could have 2 vertices at distance 1 from $\gamma$, and $2\cdot 3^{k-2}$ vertices at distance $k$. A more careful analysis might save up to a factor of 2.

\noindent (b) We could modify Algorithm~1 so that we apply operation
\textsc{Swap}$(e_{\alpha\beta},f_{\alpha\beta})$ if
$\ell(e_{\alpha\beta})=\ell(f_{\alpha\beta})$, and operation
\textsc{Reroute}$(e_{\alpha\beta},f_{\alpha\beta})$ if
$\ell(e_{\alpha\beta})=\ell(f_{\alpha\beta})$ and
$x(e_{\alpha\beta})\leq x(f_{\alpha\beta})$. Note that under
\textsc{Exchange} operations, the number of edges passing though a
neighborhood $\overline{\gamma}$ does not increase. Using this modified
algorithm, the length of every edge that has been rerouted is at most
$k$. So $\gamma$ is at distance at most $k-1$ from the endpoints of
edges that pass through $\overline{\gamma}$ (with the possible
exception of the two edges that cross at $\gamma$ in the original
drawing). This would imply $n_\gamma \leq 4\cdot 3^{k-2}$, improving the bound on $f(k)$ by a factor of 3.

\noindent\textbf{(2)} Algorithm~1 incrementally modifies the edges of a drawing to eliminate lenses. An alternative algorithm, which follows a global redrawing strategy, would yield essentially the same bound on $f(k)$, and ensure that every edge in the resulting simple topological drawing closely follows a shortest path in $N$. Specifically, we could label the segments of $N$ by $s_1,\ldots, s_{t}$, where $t\in O(kn)$ is the number of segments in $N$, and assign a weight $w(s_i)=2^i+2^{t+1}$ to every segment. For each edge $e=(u,v)\in E(G)$, consider the network $N_{uv}-(V\setminus \{u,v\})$, and then draw $e$ so that it closely follows the weighted path in $N$ from $u$ to $v$. The weights guarantee that this is also an unweighted path in $N_{uv}$. Furthermore, it is not difficult to show that if $L(e_{\alpha\beta},f_{\alpha\beta})$ is a lens, then both $e_{\alpha\beta}$ and $f_{\alpha\beta}$ follow the same path in $N$; and all such lenses can successively be eliminated by \textsc{Exchange} operations.

\noindent\textbf{(3}) We do not know whether the upper bound $f(k)\in \exp(O(k))$ can be improved to a bound polynomial in $k$. However, Algorithm~1 (as well as the global ``shortest path'' approach mentioned above) does not yield a sub-exponential bound, as there exist $k$-plane drawings for which these algorithms return simple topological graphs whose local crossing numbers are exponential in $k$ (private communication with Bal\'azs Keszegh).

\end{document}